\pgfplotsset{compat = newest} 
\pgfplotsset{cycle list/Set1-8} 
\theoremstyle{plain}
\newtheorem{theorem}{Theorem}
\newtheorem{proposition}[theorem]{Proposition}
\newtheorem{lemma}[theorem]{Lemma}
\newtheorem{conjecture}{Conjecture}
\newtheorem{problem}{Problem}
\newcommand{\R}{\mathbb{R}}
\newcommand{\Z}{\mathbb{Z}}
\newcommand{\Q}{\mathbb{Q}}
\newcommand{\true}{\textsc{true}}
\newcommand{\false}{\textsc{false}}
\newcommand{\opt}{\textsc{opt}}
\newcommand{\NP}{\mathsf{NP}}
\newcommand{\bigO}{\mathcal{O}}
\newcommand{\eps}{\varepsilon}
\DeclareMathOperator{\poly}{poly}
\newcommand{\UB}{U}
\newcommand{\pcst}{\textsc{pcst}}
\newcommand{\bpcst}{\textsc{bpcst}}
\newcommand{\mwcs}{\textsc{mwcs}}
\title{Balanced connected partitions of edge-weighted graphs: Hardness and solving methods}
\author[1]{Morteza Davari}
\author[2]{Phablo F. S. Moura}
\author[2]{Hande Yaman}
\affil[1]{SKEMA Business School, Université Côte d’Azur, Lille, France\\
\texttt{\normalsize morteza.davari@skema.edu}}
\affil[2]{Research Center for Operations Research \& Statistics, KU~Leuven, Belgium\\

\texttt{\normalsize\{phablo.moura, hande.yaman\}@kuleuven.be}}
\date{\today}
\begin{document}
\maketitle

\begin{abstract}
The balanced connected $k$-partition  problem (\textsc{bcp}) is a classic  problem, which consists in partitioning the set of vertices of a vertex-weighted connected  graph into a collection of~$k$ classes such that each class induces a connected subgraph of \emph{roughly} the same weight.
In this study, we investigate edge-weighted variants of $\textsc{bcp}$, where we are given a connected graph $G$, $k \in \Z_\ge$, and an edge-weight function $w \colon E(G)\to\Q_\ge$, and the goal is to compute a spanning $k$-forest~$\mathcal{T}$ of $G$ (i.e., a forest with exactly $k$ trees) that minimizes the weight of the heaviest tree in~$\mathcal{T}$ in the min-max version, or maximizes the weight of the lightest tree in~$\mathcal{T}$ in the max-min version.
We show that both versions of this problem are $\NP$-hard on complete graphs with $k=2$, unweighted split graphs, and unweighted bipartite graphs with $k\geq 2$ fixed.
Moreover, we prove that these problems do not admit subexponential-time algorithms, unless the Exponential-Time Hypothesis fails.
We focus on the min-max version and devise a tight $k$-approximation algorithm, compact and non-compact integer linear programming formulations, branch and cut, and branch and price algorithms. Finally, we present the outcomes of an experimental study on the performances of different solution methods.
The source code of the complete implementation of the proposed algorithms is also available.
\end{abstract}


\section{Introduction}\label{sec:intro}


Let $[k]$ denote the set $\{1, 2, \ldots, k\}$, for every
integer~$k \geq 1$. A \textit{connected $k$-partition} of a connected
graph~$G=(V,E)$ is a partition of~$V$ into nonempty
classes~$\{V_i\}_{i=1}^k$ such that, for each~$i \in [k]$, the
subgraph~$G[V_i]$ is connected, where~$G[V_i]$ denotes the subgraph
of~$G$ induced by the set of vertices~$V_i$.
Given a connected graph $G$, a positive integer $k$, the \textsc{Min-Max Balanced Connected Partition} problem (\textsc{min-max-bcp}) consists in computing a connected $k$-partition of $G$ that minimizes the size of largest class in this partition, that is, it minimizes $\max_{i \in [k]} |V_i|$.
In a more general setting, one is also given a weight function $w \colon V(G) \to \Q_\ge$ and aims to find a connected $k$ partition that minimizes the weight of the heaviest class, that is, $\max_{i \in [k]} \sum_{v \in V_i} w(v)$. 
The \textsc{Max-Min Balanced Connected Partition} problem (\textsc{max-min-bcp}) is defined analogously: Given a connected graph~$G$, $k \in \Z_\ge$, and $w\colon V(G)\to \Q_\ge$, the goal is to find a connected $k$-partition $\{V_i\}_{i=1}^k$ that maximizes $\min_{i \in [k]} \sum_{v \in V_i} w(v)$.

There is a vast literature on both versions of this problem that includes computational complexity~\citep{Chl96,ChaSalWak07}, approximation algorithms~\citep{CasEtAl21,ChenCLXZ21,moura_balanced_2023}, exact solving methods based on integer programming~\citep{MIYAZAWA2021826,ZhoWanDinHuSha19}, and a polyhedral study~\citep{miyazawa2020cut}. 
Furthermore,  a rich collection of real-world applications in image processing~\citep{LucPerSim93}, cluster analysis~\citep{MarSimNal97}, robotics~\citep{ZhoWanDinHuSha19},  power networks~\citep{10288739, wang2024layout}, public security~\citep{MIYAZAWA2021826}, and political districting~\citep{mendes2022capacitated} can be viewed as instances of (\textsc{max-min} or \textsc{min-max})-\textsc{bcp}.

In this paper, we investigate edge-weighted variants of $\textsc{bcp}$, where we are given a connected graph $G$, $k \in \Z_\ge$, and an edge-weight function $w \colon E(G)\to\Q_\ge$, and the goal is to compute a spanning $k$-forest~$\mathcal{T}$ of $G$ (i.e., a forest with exactly $k$ trees) that minimizes the weight of the heaviest tree in~$\mathcal{T}$ in the min-max version, and maximizes the weight of the lightest tree in~$\mathcal{T}$ in the max-min version.
More formally, these problems are defined as follows.

\begin{problem}[\textsc{Min-Max Balanced Spanning Forest} (\textsc{min-max-bsf})]\hfill\\
\textsc{Instance:} Connected graph $G$, $k \in \Z_\ge$, and  $w \colon E(G) \to \Q_\ge$. \\
\textsc{Objective:} Collection of $k$ vertex-disjoint trees $\{T_i\}_{i \in [k]}$ spanning $V(G)$ that minimizes 
\(\max_{i \in [k]} w(T_i)\), where $w(T_i):= \sum_{e \in E(T)} w(e)$ for all $i \in [k]$.
\end{problem}

The \textsc{Max-Min Balanced Spanning Forest} (\textsc{max-min-bsf}) is defined analogously. 
If $k=1$, these problems are precisely the classic \textsc{Minimum Spanning Tree} and \textsc{Maximum Spanning Tree}, where the latter can be transformed into the former simply by reversing the signs of the weights~\citep[see][p.~336]{pemmaraju2003computational}.
For $k\ge 2$, \textsc{max-min-bsf} and \textsc{min-max-bsf} are not equivalent.
Precisely, an optimal solution to one of the problems is not necessarily an optimal solution to the other, as illustrated in Figure~\ref{fig:example:optimal}.
This contrasts with the fact that \textsc{min-max-bcp} and \textsc{max-min-bcp} are equivalent if $k=2$, that is, for any instance, an optimal solution to one problem is also an optimal solution to the other (possibly with different optimal values).

\begin{figure}[t!]
\centering
\begin{subfigure}{.32\linewidth}
    \centering
    \begin{tikzpicture}[
            auto,
            node distance = 1.3cm, 
            semithick, 
            scale=0.6, 
            every node/.style={scale=0.75}
        ]

        \tikzstyle{every state}=[
            draw = black,
            semithick,
            fill = white,
            minimum size = 2mm,
            inner sep=0pt
        ]


        \node[state] (v5) []{};
        \node[state] (v2) [above left of=v5] {};
        \node[state] (v3) [below left of=v5] {};
        \node[state] (v1) [left of=v2] {};
        \node[state] (v4) [left of=v3] {};
        \node[state] (v6) [right of=v5] {};
        \node[state] (v7) [above right of=v6] {};
        \node[state] (v8) [below right of=v6] {};

        \path (v1) edge node{1} (v2);
        \path (v2) edge node{1} (v5);
        \path (v5) edge node{1} (v3);
        \path (v3) edge node{1} (v4);
        \path (v2) edge node{3} (v3);
        \path (v5) edge node{1} (v6);
        \path (v6) edge node{1} (v7);
        \path (v8) edge node{1} (v6);
    \end{tikzpicture}%
    \caption{Edge-weighted graph.\\\hfill} 
\end{subfigure}\hfill
\begin{subfigure}{.32\linewidth}
    \centering
    \begin{tikzpicture}[
            auto,
            node distance = 1.3cm, 
            semithick, 
            every node/.style={scale=0.75}
        ]

        \tikzstyle{every state}=[
            draw = black,
            semithick,
            fill = white,
            minimum size = 2mm,
            inner sep=0pt
        ]

         \node[state] (v5) []{};
        \node[state] (v2) [above left of=v5] {};
        \node[state] (v3) [below left of=v5] {};
        \node[state] (v1) [left of=v2] {};
        \node[state] (v4) [left of=v3] {};
        \node[state] (v6) [right of=v5] {};
        \node[state] (v7) [above right of=v6] {};
        \node[state] (v8) [below right of=v6] {};

        \path (v1) edge node{1} (v2);
        \path (v2) edge node{1} (v5);
        \path (v5) edge node{1} (v3);
        \path (v3) edge node{1} (v4);
        \path (v6) edge node{1} (v7);
        \path (v8) edge node{1} (v6);
    \end{tikzpicture}%
    \caption{Optimal solution for \textsc{min-max-bsf} of value~4.}
\end{subfigure}\hfill
\begin{subfigure}{.32\linewidth}
    \centering
    \begin{tikzpicture}[
            auto,
            node distance = 1.3cm, 
            semithick, 
            scale=0.6, 
            every node/.style={scale=0.75}
        ]

        \tikzstyle{every state}=[
            draw = black,
            semithick,
            fill = white,
            minimum size = 2mm,
            inner sep=0pt
        ]

        \node[state] (v5) []{};
        \node[state] (v2) [above left of=v5] {};
        \node[state] (v3) [below left of=v5] {};
        \node[state] (v1) [left of=v2] {};
        \node[state] (v4) [left of=v3] {};
        \node[state] (v6) [right of=v5] {};
        \node[state] (v7) [above right of=v6] {};
        \node[state] (v8) [below right of=v6] {};

        \path (v1) edge node{1} (v2);
        \path (v3) edge node{1} (v4);
        \path (v2) edge node{3} (v3);
        \path (v5) edge node{1} (v6);
        \path (v6) edge node{1} (v7);
        \path (v8) edge node{1} (v6);
    \end{tikzpicture}%
    \caption{Optimal solution for \textsc{max-min-bsf} of value~3.}
\end{subfigure}
    \caption{Example of edge-weighted graph which shows that an optimal solution to \textsc{min-max-bsf} is not necessarily an optimal solution to  \textsc{max-min-bsf} with $k=2$.}
    \label{fig:example:optimal}
\end{figure}


The study of \textsc{min-max-bsf} is originally motivated by a problem emerging in a computational topology software, which performs expensive computations organized in graph structures (see \citet{madkour2017}).
\citet{bakhshipour2019} proposes heuristics to another application of \textsc{min-max-bsf} in the context of the design of decentralized urban drainage systems, where a feasible sewer network layout is represented as a collection of rooted trees, 
each root corresponding to an outlet connected to, for example, a wastewater treatment plant.

\citet{madkour2017} proved that the decision version of \textsc{min-max-bsf} with $k=2$ is weakly $\NP$-complete on bipartite graphs.
\citet{vaishali2018} claimed that the \emph{unweighted} version of \textsc{min-max-bsf} is $\NP$-complete on bipartite graphs when~$k=2$ using a reduction from \textsc{2-partition}. 
However, their reduction is not polynomial, as the size of the constructed graph is larger than the sum of the numbers in the \textsc{2-partition} instance.
On the positive side, \citet{vaishali2018} designed an algorithm that solves the problem restricted to trees in $\bigO(kn^3)$ time.

\subsection*{Contributions}
This work offers theoretical and practical contributions to the study of balanced spanning forest problems.
In Section~\ref{sec:hardness}, we show that both \textsc{max-min-bsf} and~\textsc{min-max-bsf} are $\NP$-hard on complete graphs with $k=2$, unweighted split graphs, and unweighted bipartite graphs with $k\geq 2$ fixed.
Moreover, we prove that these problems do not admit subexponential-time algorithms, unless the Exponential-Time Hypothesis fails. 
We design a tight $k$-approximation algorithm and heuristics for \textsc{min-max-bsf} in Section~\ref{sec:heur}. 

On the practical side, we first derive three mixed-integer linear programming formulations for each version of~\textsc{bsf} in Section~\ref{sec:formulations}: a compact model based on network flows, an assignment formulation that uses subtour elimination constraints, and a natural set partitioning formulation.
In Section~\ref{sec:bp}, we describe a branch-and-price approach based on the latter model which has as the pricing problem a budgeted variant of the classic Prize-Collecting Steiner Tree.
In Section~\ref{sec:experiments}, we report on computational experiments for the \textsc{min-max-bsf} with three solving methods based on the proposed formulations and heuristics.
These results indicate that the branch-and-price approach is superior to the others, particularly on the instances with $k>2$.
Concluding remarks and directions for further research are discussed in Section~\ref{sec:conclusion}.
A free, open-source repository with the complete implementation of all proposed solving methods is available\footnote{ \texttt{https://github.com/md6712/Edge-Weighted-Balanced-Connected-Partitions}}.


\section{Hardness
}\label{sec:hardness}

Let us first present the definition of the 2-partition problem, which is used in the next reductions.
\begin{problem}[\textsc{2-partition}]\hfill\\
\textsc{Instance:} Set $A$ and a function $s \colon A \to \Z_>$ such that $s(a) < S$ for each $a \in A$ and $\sum_{a \in A} s(a) = 2S$. \\
\textsc{Question:} Is there a subset $A'$ of $A$ such that $\sum_{a \in A'} s(a) = S$ ?
\end{problem}

\begin{theorem}
    The decision versions of \textsc{min-max-bsf} and \textsc{max-min-bsf} are weakly $\NP$-complete on complete graphs with $k=2$.
\end{theorem}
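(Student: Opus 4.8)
The plan is to give polynomial-time Karp reductions from \textsc{2-partition}, which is weakly $\NP$-complete. Membership in $\NP$ is immediate: a spanning $2$-forest serves as a certificate whose two tree weights can be computed and compared against the threshold in polynomial time. So the real work lies in the two hardness reductions, one for each version.

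For \textsc{min-max-bsf}, given a \textsc{2-partition} instance $(A,s)$ with $\sum_{a\in A}s(a)=2S$, I would build the complete graph $G$ on vertex set $\{r_1,r_2\}\cup\{v_a : a\in A\}$ (hence $|A|+2$ vertices) and set $w(r_1 v_a)=w(r_2 v_a)=s(a)$ for every $a$, while making every remaining edge (the hub--hub edge $r_1r_2$ and all element--element edges $v_av_b$) ``heavy,'' say of weight $2S$. I claim the optimum equals $S$ exactly when the instance is a YES-instance. The forward direction is easy: a balanced subset $A'$ yields the two stars centered at $r_1$ and $r_2$ with leaf sets $\{v_a:a\in A'\}$ and $\{v_a : a\notin A'\}$, each of weight $S$.

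The converse is the crux, and is where I expect the main obstacle: I must rule out the many ``shortcut'' forests that the completeness of $G$ makes available. The key is a counting argument. If some forest has both trees of weight at most $S$, then no heavy edge is used (a single one already has weight $2S>S$), so every used edge joins a hub to an element. A spanning forest with exactly two trees on $|A|+2$ vertices has exactly $|A|$ edges; since all of them are incident to the elements and each of the $|A|$ elements needs degree at least one, every element has degree exactly one. Hence each element attaches to exactly one hub, the vertices $r_1$ and $r_2$ lie in distinct trees, and the forest is precisely two stars. Their weights sum to $2S$ and are each at most $S$, forcing both to equal $S$ and exhibiting a balanced partition. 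All weights are integers of polynomial bit-length, so the reduction is polynomial and preserves weak hardness.

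For \textsc{max-min-bsf} I would use the symmetric idea but reverse the role of the auxiliary edges: keep $w(r_i v_a)=s(a)$ but now set every hub--hub and element--element edge to weight $0$, and ask whether the optimum is at least $S$. With cheap auxiliary edges the solver can never inflate the lighter tree: keeping both hubs in one tree forces the other tree (spanned by weight-$0$ edges only) to have weight $0$, and when the hubs are separated each element can be incident to at most one weighted edge, contributing at most $s(a)$ to a single tree, so the two tree weights sum to at most $2S$ and their minimum is at most $S$. Equality again forces every element to be attached directly to a single hub, i.e.\ a balanced bipartition of $A$, so the optimum is at least $S$ if and only if the instance is a YES-instance. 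As before the construction is polynomial with polynomially bounded integer weights, which completes the proof.
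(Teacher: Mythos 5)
Your construction is essentially the paper's: a complete graph on $\{r_1,r_2\}\cup\{v_a : a\in A\}$ with hub--element edges of weight $s(a)$ and all other edges either prohibitively heavy (min-max) or negligible (max-min). The paper uses weight $S+1$ and a small $\varepsilon>0$ where you use $2S$ and $0$; these choices are interchangeable. Your max-min direction is complete and correct, and in fact slightly cleaner than the paper's (weight $0$ avoids the paper's bookkeeping with $\varepsilon$ and the edge set $E'$).

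There is, however, one unjustified step in your min-max converse: the assertion that ``each of the $|A|$ elements needs degree at least one.'' A spanning $2$-forest may contain a singleton tree, so an element $v_a$ can have degree $0$: then $\{v_a\}$ is one of the two trees, the other tree contains both hubs and all remaining elements (with some element of degree $2$ joining the hubs), and your edge count of $|A|$ edges is still respected, so counting alone does not exclude this configuration. It must be ruled out by a weight argument: in that case the big tree has at least one hub--element edge incident to each of the $|A|-1$ remaining elements, hence weight at least $\sum_{b\neq a} s(b) = 2S-s(a)>S$ because $s(a)<S$, contradicting the assumption that both trees weigh at most $S$. This is exactly the case the paper dispatches by noting that if both hubs lie in the same tree, the other tree is a singleton and the hub tree is overweight --- and it is the analogue of the ``both hubs in one tree'' case that you \emph{did} treat explicitly in your max-min argument. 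With that one additional observation, your proof is correct.
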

\begin{proof}
Consider an instance~$(A, s)$ of \textsc{2-partition}.
We now construct a complete graph~$G$ with vertex set $\{v_a : a \in A\} \cup \{x,y\}$,  and define edge weights $w \colon E(G) \to \Z_\ge$ such that $w(e)=s(a)$ if $e=\{x,v_a\}$ or $e=\{y,v_a\}$ with $a \in A$, and $w(e)=S+1$ otherwise.

Let $\{T_1,T_2\}$ be a spanning $2$-forest of $G$ such that $w(T_1)\le w(T_2)\le S$.
Clearly, both trees can contain only edges with one endpoint in $\{x,y\}$ and the other in $V(G)\setminus\{x,y\}$.
Moreover, $x$ and $y$ cannot belong to the same tree, say~$T_i$, otherwise $T_{3-i}$ is a singleton and $w(T_i) > S$.
Therefore, it holds that $w(T_1)=w(T_2)=S$, and so the leaves of any of these trees induce a set~$A'\subset A$ such that $\sum_{a \in A'} s(a) = S$.

Similarly to the min-max case, we can reduce an instance $(A,s)$ of \textsc{2-partition} to a complete graph $G$ on vertices $\{v_a : a \in A\} \cup \{x,y\}$  with edge weights $w \colon E(G) \to \Z_\ge$ such that $w(e)=s(a)$ if $e=\{x,v_a\}$ or $e=\{y,v_a\}$ with $a \in A$, and $w(e)=\varepsilon:=\min_{a \in A} s(a)/|E(G)|$ otherwise.

Let $E':=\{e\in E(G) :  w(e) > \varepsilon\}$, and let $\{T_1,T_2\}$ be a spanning $2$-forest of $G$ such that $w(T_1)\ge w(T_2)\ge S$.
First note that $x$ and $y$ cannot belong to the same tree $T_i$ with $i \in \{1,2\}$, otherwise $w(T_{3-i}) < \varepsilon|E(G)| < S$.
Suppose to the contrary that one of the trees contains an edge in $E(G)\setminus E'$.
Since $T_1$ and $T_2$ are trees that induce a partition of~$V(G)$, there exists $b \in A$ such that neither $\{v_b,x\}$ nor $\{v_b,y\}$ belong to any of these trees.
Hence, $w(T_1)+w(T_2) < w(E(T_1)\cap E') + w(E(T_2)\cap E') + s(b) \le 2S$, a contradiction. 
Therefore, all the edges of $T_1$ and $T_2$ are in $E'$, $w(T_1)=w(T_2)=S$, and so the leaves of any of these trees induce a set~$A'\subset A$ such that $\sum_{a \in A'} s(a) = S$.
 \end{proof}

The previous result implies that \textsc{min-max-bsf} and \textsc{max-min-bsf} are \emph{weakly} $\NP$-hard on complete graphs, that is, they are computationally difficult if the edge weights are not bounded by a polynomial in the size of the graph. 
 We next investigate the complexity of the balanced spanning forest problems on \emph{unweighted} graphs, that is, graphs with all edges having weight equal to one.
 We first show that both versions of the problem are $\NP$-hard 
on split graphs, a class that includes the complete graphs.
A \emph{split graph} is a graph in which the vertices can be partitioned into a clique and an independent set.
Split graphs are \emph{chordal graphs},  a class of graphs that is characterized by a linear ordering of its vertices such that, for each vertex $v$,  the neighbors of $v$ that occur after $v$ in this ordering induce a clique.
The proposed reductions are from the 3-partition problem defined as follows.

\begin{problem}[\textsc{3-partition}]\hfill\\
\textsc{Instance:} Set $A$ with $|A|=3m$, a positive integer $p$, and a function $s \colon A \to \Z_>$ such that $p/4 < s(a) < p/2$ for each $a \in A$ and $\sum_{a \in A} s(a) = mp$. \\
\textsc{Question:} Is there a partition $\{A_i\}_{i \in [m]}$ of $A$ such that $\sum_{a \in A_i} s(a) = p$ for all $i \in [m]$ ?
\end{problem}

\begin{theorem}
    The decision versions of \textsc{min-max-bsf} and \textsc{max-min-bsf}  are $\NP$-complete on unweighted split graphs.
\end{theorem}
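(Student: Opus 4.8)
The plan is to reduce from \textsc{3-partition}. Since \textsc{3-partition} is strongly $\NP$-complete, it remains hard even when the item sizes are bounded by a polynomial in~$|A|$, and this is precisely what allows me to encode a size~$s(a)$ by~$s(a)$ vertices while keeping the reduction polynomial. Membership in~$\NP$ is immediate: given a candidate collection of $k$ subtrees, one checks in polynomial time that they are vertex-disjoint, span~$V(G)$, and respect the weight bound. The observation that drives the unweighted case is that $w(T_i)=|E(T_i)|=|V(T_i)|-1$, so balancing the tree weights is the same as balancing the numbers of vertices, and the task is really to find a connected $k$-partition of $V(G)$ with prescribed class sizes.

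Given an instance $(A,p,s)$ of \textsc{3-partition} with $|A|=3m$, I would build a split graph~$G$ as follows. The clique~$C$ contains one \emph{representative} vertex $r_a$ for each $a \in A$, so $|C|=3m$; the independent set~$I$ contains, for each $a \in A$, a set $P_a$ of $s(a)-1$ \emph{pendant} vertices, each adjacent only to~$r_a$. Then $G$ has $n=\sum_{a \in A} s(a)=mp$ vertices, it is connected, and it is a split graph by construction. Finally I set $k=m$ and take the all-ones weight function $w\equiv 1$.

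The correctness rests on a structural claim about feasible solutions. Asking for a spanning $m$-forest with $\max_{i} w(T_i)\le p-1$ is the same as asking for a connected $m$-partition $\{V_i\}$ with $|V_i|\le p$ for all~$i$; since $\sum_i |V_i| = mp$ over exactly $m$ classes, every class then has size exactly~$p$. As $p\ge 3$, each class is nontrivial, and because a pendant of~$a$ is adjacent only to~$r_a$, any nontrivial connected class containing a pendant of~$a$ must contain~$r_a$; hence the blocks $B_a:=\{r_a\}\cup P_a$, each of size~$s(a)$, are never split. Thus each class is a union of whole blocks whose sizes sum to~$p$, and the bounds $p/4<s(a)<p/2$ force exactly three blocks per class. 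Conversely, since the representatives form a clique, any partition of~$A$ into triples summing to~$p$ yields a connected $m$-partition with all classes of size~$p$. Therefore such a forest exists if and only if $(A,p,s)$ is a yes-instance. The same construction settles \textsc{max-min-bsf} by instead asking for $\min_i w(T_i)\ge p-1$, i.e. $|V_i|\ge p$ for all~$i$, which again forces every class to have size exactly~$p$ and hence the identical block/triple analysis.

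The step I expect to be the main obstacle is designing the gadget so that the $s(a)$ vertices encoding a single item are \emph{forced to stay together} in one tree while the whole graph remains an \emph{unweighted split graph}; the pendant-to-representative attachment achieves this indivisibility precisely because two independent-set vertices can only be joined through a common clique neighbour, and each pendant of~$a$ has only~$r_a$ as a neighbour. Getting this forcing right—rather than leaving the item vertices free to be distributed across trees, which would decouple the construction from \textsc{3-partition}—is the crux of the argument.
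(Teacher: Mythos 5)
Your proof is correct and follows essentially the same approach as the paper: a reduction from \textsc{3-partition} in which each item $a$ becomes a clique representative with $\Theta(s(a))$ pendant vertices, so that pendants force the item blocks to stay together, the bounds $p/4 < s(a) < p/2$ force exactly three blocks per tree, and strong $\NP$-completeness keeps the construction polynomial. The only (cosmetic) difference is that you build the unweighted split graph directly with blocks of size $s(a)$ and threshold $p-1$, whereas the paper first gives a weighted reduction (blocks $\{v_a,v'_a\}$, threshold $p+2$) and then replaces each weighted edge by $s(a)-1$ pendants.
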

\begin{proof}
    We first provide a polynomial-time reduction from \textsc{3-partition} to the decision version of \emph{(weighted)} \textsc{min-max-bsf}. Then we modify the produced instance of \textsc{bsf} into an \emph{unweighted} instance.
    
    Let $(A,p,s)$ be an instance of \textsc{3-partition}.
    We create a graph $G$ on vertices $\{v_a, v'_a : a \in A\}$ and edges $\{\{v_a, v'_a\} : a \in A\} \cup \{\{v_a, v_b\} : a,b \in A \text{ with } a\neq b\}$.
    Clearly, $G$ is a split graph as $V':=\{v'_a \in V(G) : a \in A\}$ induces an independent set in $G$, and $V(G)\setminus V'$ induces a clique in $G$.
    Let us define a weight function $w \colon E(G) \to \Z$ such that $w(e)=s(a)$ if $e=\{v_a, v'_a\}$, and $w(e)=1$ otherwise.
    We next prove that $(A,p,s)$ is a \textsc{yes}-instance if and only if the edge-weighted graph $(G,w)$ contains a spanning $m$-forest $\mathcal{T}$  such that \(\max^{}_{T \in \mathcal{T}} w(T) \le p+2\).

    Let $\mathcal{T}$ be a spanning $m$-forest such that \(\max^{}_{T \in \mathcal{T}} w(T) \le p+2\).
    Suppose to the contrary that there exists a tree $T \in \mathcal{T}$ such that $|V(T) \cap V'| \leq 2 $.
    Hence, there is $T' \in \mathcal{T}$ such that $|V(T') \cap V'| \ge 4$, and so $w(T') >p+2$, a contradiction.
    Since $|A|=3m$, for every $T \in \mathcal{T}$, $T$ has exactly three leaves and three internal vertices.
    We define $A_T = \{a \in A : v_a \text{ is an internal vertex of } T\}$, and observe that $\sum_{a \in A_T} s(a) \le p$ for all $T \in \mathcal{T}$.
    Because~$\sum_{a \in A} s(a) = mp$, we conclude that the collection $\{A_T\}_{T \in \mathcal{T}}$ forms a partition of $A$ into $m$ triplets such that $\sum_{a \in A_T} s(a)= p$ for all $T \in \mathcal{T}$. 
    To prove the converse, it suffices to observe that every partition $\{A_i\}_{i\in[m]}$ of $A$ with $\sum_{a \in A_i}s(a)=p$ for all $i \in [m]$ induces a spanning $m$-forest $\{T_i\}_{i \in [m]}$ such that $T_i$ is any tree in~$G$ on vertices $\bigcup_{a\in A_i} \{v_a,v'_a\}$ and $w(T_i)=p+2$.

    We now transform the weighted graph $(G,w)$ into an \emph{unweighted} instance as follows.
    For each vertex $v_a \in V(G)\setminus V'$,  add a set $L_a$ of $s(a)-1$ newly created vertices, and an edge $\{v_a,v\}$ for each $v \in L_a$.
    Let us denote by $G'$ the graph obtained using this transformation from~$G$, and note that $G'$ is also a split graph.
    Consider a spanning $m$-forest~$\mathcal{T}$ of $G'$ such that $w(T)=|E(T)|\leq p+2$ for all $T \in \mathcal{T}$.
    Note that, for each $a \in A$, all vertices in $L_a\cup\{v_a, v'_a\}$ belong to the same tree in $\mathcal{T}$ as $p\geq 3$.
    By the same arguments used for the weighted case,  we   conclude that 
     $(A,p,s)$ is a \textsc{yes}-instance if and only if $G'$ contains a spanning $m$-forest $\mathcal{T}$  such that \(\max_{T \in \mathcal{T}} w(T) \le p+2\).
    Since \textsc{3-partition} is \emph{strongly} $\NP$-complete (see~\citet{garey1979computers}) and the decision version of \textsc{bsf} clearly belongs to $\NP$, it holds that the decision version of \textsc{Unweighted Balanced Spanning Forest} is $\NP$-complete.

    The reduction from \textsc{3-partition} to \textsc{max-min-bsf} is the same as described above. 
    Consider now a spanning $m$-forest~$\mathcal{T}$ of $G'$ such that $w(T)=|E(T)|\geq p+2$ for all $T \in \mathcal{T}$.
    If there exists a tree $T \in \mathcal{T}$ containing at most two vertices in $V(G)\setminus V'$, then $w(T) <p+2$, a contradiction.
    Using the same arguments in the proof for the min-max variant, it holds that the internal vertices of each of the trees in~$\mathcal{T}$ induce a partition $\{A_T\}_{T \in \mathcal{T}}$ of $A$ into $m$ triplets such that $\sum_{a \in A_T} s(a)= p$ for all $T \in \mathcal{T}$.
 \end{proof}

Using a reduction from the 3-dimensional matching problem, \citet{dyer1985complexity} showed that, for each fixed $k\ge2$, \textsc{max-min-bcp} is $\NP$-hard on unweighted bipartite graphs.
This immediately implies that \textsc{max-min-bsf} is also $\NP$-hard on unweighted bipartite graphs, for every fixed $k\ge 2$.
 By means of a reduction from \textsc{3-sat}, we next prove an analogous  result for the \textsc{min-max-bsf} and give an alternative proof for \textsc{max-min-bsf}.
Later on, we shall use this reduction from \textsc{3-sat} to establish a lower bound on the running time to solve theses problems under the Exponential-Time Hypothesis. 

\begin{problem}[\textsc{3-satisfiability} (\textsc{3-sat})]\hfill\\
\textsc{Instance:} Set $X$ of Boolean variables, and a collection of clauses~$\mathcal{C}$ over $X$ such that each clause has exactly 3 literals. \\
\textsc{Question:} Is there a truth assignment $\sigma \colon X \to \{\true,\false\}$ that satisfies every clause in~$\mathcal{C}$?
\end{problem}

\begin{theorem}\label{thm:hardness:3sat}
    The decision versions of \textsc{max-min-bsf} and \textsc{min-max-bsf}  are $\NP$-complete on unweighted bipartite graphs for each fixed~$k\ge2$.
\end{theorem}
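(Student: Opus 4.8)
The plan is to give a single polynomial-time reduction from \textsc{3-sat} that serves both problems at once, exploiting the fact that on an unweighted graph every tree $T_i$ in a spanning forest satisfies $w(T_i)=|E(T_i)|=|V(T_i)|-1$. Hence, if the constructed graph $G$ has exactly $kN$ vertices, then a spanning $k$-forest obeys $\max_{i} w(T_i)\le N-1$ if and only if it obeys $\min_{i} w(T_i)\ge N-1$, since the total number of edges of any spanning $k$-forest on $kN$ vertices is $kN-k=k(N-1)$, so either inequality forces $|V(T_i)|=N$ for every $i$. Consequently the decision versions of both \textsc{min-max-bsf} and \textsc{max-min-bsf} (each with bound $N-1$) reduce to the single question: does $G$ admit a connected $k$-partition into classes of size exactly $N$? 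So it suffices to build a bipartite graph $G$ on $kN$ vertices that has such a perfectly balanced connected $k$-partition if and only if the \textsc{3-sat} instance $(X,\mathcal{C})$ is satisfiable.

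First I would settle the case $k=2$, i.e., connected bisection. I would introduce two hub vertices $R$ and $B$, meant to be the cores of the ``true'' tree $T_R$ and the ``false'' tree $T_B$; for each variable a pair of literal vertices whose placement in $T_R$ or $T_B$ encodes the truth value, linked through small selector and tail gadgets so that the balance constraint forces exactly one literal of each variable into each tree (a consistent assignment); and for each clause $C_j$ a clause vertex $y_j$ adjacent, through intermediate vertices inserted to control colours, only to the literal vertices that satisfy $C_j$. Dummy pendant vertices (``padding'') are added so that the only way to reach $N$ vertices on each side is to place all clause vertices in $T_R$, which is possible precisely when each $y_j$ can be joined to $R$ through a literal vertex lying in $T_R$, i.e., through a literal set to \true. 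The forward direction then reads a satisfying assignment off the partition, while the backward direction routes each $y_j$ to a satisfied literal; connectivity within each class is immediate from the hub-and-tail structure.

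To handle an arbitrary fixed $k\ge 2$, I would attach $k-2$ rigid ``ballast'' gadgets to the bisection graph $G_2$: each is a star $K_{1,N-1}$ whose centre is joined by a single edge to $R$. Since every leaf has degree one, no class of size $N\ge 2$ can isolate a leaf, so each star lies entirely in one class; and if a star's centre shared a class with $R$, that class would have at least $N+1$ vertices, which is too large. Hence each star forms its own class of size exactly $N$, the remaining two classes reconstruct a connected bisection of $G_2$, and $G$ has $kN$ vertices while staying connected. Bipartiteness is checked by exhibiting an explicit $2$-colouring (hubs and clause vertices on one side, literal and padding vertices on the other, with subdivisions wherever a monochromatic edge would otherwise appear), and membership in $\NP$ is clear since a spanning $k$-forest is a polynomial certificate.

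The main obstacle is the exact size bookkeeping in the $k=2$ gadget: the padding must be calibrated so that \emph{no} spurious balanced partition exists, simultaneously enforcing both one literal per variable on each side and every clause vertex reaching the true tree, while the graph remains bipartite even though literal vertices of both polarities feed into a common clause vertex. I expect to resolve this by subdividing the clause--literal edges and equipping each literal vertex with a short tail of a fixed length, chosen so that the per-variable counts are individually tight rather than only globally balanced, thereby ruling out cross-variable compensation. Finally, since each gadget has constant size, $G$ has $O(|X|+|\mathcal{C}|)$ vertices, which is exactly the linear blow-up needed to transfer the lower bound to the Exponential-Time Hypothesis in the subsequent argument.
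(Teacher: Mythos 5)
Your high-level plan coincides with the paper's: the same unification (on a graph with exactly $kN$ vertices, $\max_i w(T_i)\le N-1$ and $\min_i w(T_i)\ge N-1$ each force every tree to have exactly $N$ vertices, so both problems become ``connected $k$-partition into classes of size exactly $N$''), the same hub/literal/clause/padding architecture for $k=2$, and a ballast gadget for $k>2$ (your stars $K_{1,N-1}$ attached to the hub are a correct, arguably cleaner, alternative to the paper's attached paths). However, there is a genuine gap: the $k=2$ construction is never actually specified. You yourself flag the ``exact size bookkeeping'' --- calibrating tails and padding so that \emph{no} spurious balanced partition exists --- as the main obstacle and only promise to resolve it; but that calibration, together with the counting and separator arguments it enables, \emph{is} the proof. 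In the paper's construction the rigidity comes precisely from making the per-variable pendant sets large: each $u_x$ carries $2N+M-1$ pendant vertices, so any tree missing even one $u_x$-cluster falls short of the required $t$ vertices; this forces every $u_x$-cluster into the tree of $b$, whereupon the separator $\{v_x,v_{\bar x}\}$ forces exactly one literal per variable into the tree of $a$, and a final count forces every clause vertex there as well.

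Moreover, your sketched fix points in the wrong direction: you propose tails of ``fixed length'' and conclude that $G$ has $O(|X|+|\mathcal{C}|)$ vertices. With constant-size gadgets the balance condition is a single global cardinality constraint, and exactly the cross-variable compensation you worry about (one variable placing both literals on the true side, another placing both on the false side) is not excluded by counting; subdividing clause--literal edges controls bipartiteness, not cardinalities. The paper's unweighted graph has $\Theta(N^2+NM)$ vertices for this very reason, and the linear-size instances needed for the ETH transfer are obtained only in the \emph{weighted} setting (Lemma~\ref{lemma:reduction}), by collapsing each pendant set into a single weighted edge. So your closing claim of a linear-size unweighted reduction is unsubstantiated, and the core of the NP-completeness argument remains to be carried out.
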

\begin{proof}
    We first show a polynomial-time reduction from \textsc{3-sat} to \textsc{min-max-bsf} with $k=2$.
    Consider an instance~$\phi=(X,\mathcal{C})$ of \textsc{3-sat} containing a set~$X$ of $N$ variables and a set $\mathcal{C}$ of $M$ clauses.
    Let~$G$ be the graph obtained from~$(X, \mathcal{C})$ as follows.
    For each variable~$x \in X$, $G$ has two vertices~$v_x$ and~$v_{\bar x}$, a vertex $u_x$ which is adjacent to both $v_x$ and~$v_{\bar x}$, and  a set~$L_x$ containing exactly $2N+M-1$ vertices such that their single neighbor in $G$ is $u_x$.
    For each clause~$C \in \mathcal{C}$, the graph $G$ has a vertex~$z^{}_C$ which is adjacent to a vertex~$v_{y}$ if and only if~$y$ is a literal in~$C$.
    There are two additional vertices~$a$ and $b$, and each of them is adjacent to $v_x$ and $v_{\bar x}$ for all $x \in X$.
    Finally, $G$ contains a set~$L_a$ of $t-N-M-1$ vertices and a set~$L_b$ of $t-N-N(2N+M)-1$ vertices such that every vertex in $L_a$ (resp. $L_b$) is adjacent only to $a$ (resp. $b$) in $G$, where $t:=(N+1)(2N+M)$. Note that the size of $G$ is $\bigO(t) = \bigO(N^2+NM)$, that is, polynomial in the size of $\phi$.
    Moreover, it is clear from its construction that $G$ is bipartite.
    Figure~\ref{fig:reduction:3sat} depicts an example of this transformation.

    Suppose first that $G$ is spanned by a forest consisting of two trees $T_1$ and~$T_2$ such that $w(T_1)\leq w(T_2) \leq t-1$.
    Note that~$w(T_1)= w(T_2) = t-1$ because~$G$ has precisely $2t$ vertices.
    Note also that, for each vertex $v \in\{a,b\}$ or~$v$ associated with a variable in $X$, all vertices in~$L_v$ belong to the same tree containing~$v$ as $v$ is their only neighbor, and each of the trees $T_1$ and $T_2$ contains at least two vertices.  
    Since $a$ and $b$ are not adjacent in $G$, and $|L_a|+|L_b|=t-2$, any tree containing both $a$ and $b$ has weight larger than~$t-1$. 
    Hence, $a$ and $b$ belong to different trees in the spanning forest $\{T_1, T_2\}$.
    Let us assume (w.l.o.g.) that $a \in V(T_1)$ and $b \in V(T_2)$.
    Consider a variable $x \in X.$
    Suppose to the contrary that~$u_x \notin V(T_2)$, then $w(T_2) \leq 2t-1 - (|L_a|+1) - (|L_x|+1) \le 2t -1 - (t -N-M) - (2N+M) = t-1-N \leq t -2$, a contradiction.
    Thus, we have~$u_x \in V(T_2)$ for all $x \in X$.
    Since the set $\{v_x,v_{\bar x}\}$ is a separator of $u_x$ and $b$ in $G$, we have $|\{v_x,v_{\bar x}\} \cap V(T_1)|\le 1$ for every $x \in X$.
    If $\{v_x, v_{\bar x}\} \cap V(T_1)=\emptyset$, then $w(T_1)\leq |L_a|+N-1+M=t-2$, a contradiction.
    Hence, it holds that, for every $x \in X$,  $T_1$ contains either~$v_x$ or~$v_{\bar x}$, and exactly one of them.
    Finally, one may easily verify that  $z^{}_C \in V(T_1)$ for all $C \in \mathcal{C}$.
    We now define an assignment $\sigma \colon X \to \{\true, \false\}$ such that $\sigma(x)=\true$ if and only if $v_x \in V(T_1)$.
    It follows from the construction of~$G$ that~$\sigma$ satisfies~$\phi$.

   Let~$\sigma \colon X \to \{\true, \false\}$ be an assignment that satisfies the Boolean formula~$\phi$.
   We construct the following a partition~$\{V_1,V_2\}$ of $V(G)$:
   \begin{align*}
       V_1:= & \{a\}\cup L_a \cup\{v_x \colon x\in X \text{ and } \sigma(x) = \true\} \cup\{v_{\bar x} \colon x\in X \text{ and } \sigma(x) = \false\}\\
       & \cup\{z_C \colon C\in \mathcal{C}\}; \:\:\text{and}\\
       V_2:= &\{b\}\cup L_b \cup\{v_x \colon x\in X \text{ and } \sigma(x)=\false\}\cup\{v_{\bar x} \colon x\in X \text{ and } \sigma(x)=\true\}\\
       & \cup \left(\bigcup_{x\in X} L_x \cup \{u_x\}\right).
   \end{align*}
   Since~$\sigma$ satisfies~$\phi$, for each clause $C \in \mathcal{C}$, there exists a variable $x \in X$ such that either~$\sigma(x)=\true$ and~$x \in C$, or~$\sigma(x)=\false$ and~$\bar{x} \in C$.
    As a consequence, $V_1$ induces a connected subgraph of $G$.
    Observe that $|V_1| = |V_2 |= t$, and that there exist trees $T_1$ and $T_2$ spanning $V_1$ and $V_2$, respectively.
    Thus, we have $w(T_1) = w(T_2) = t-1$.
    It follows from the previous discussion that $\phi$ is a \textsc{yes}-instance of \textsc{3-sat} if and only if $G$ contains a spanning forest with two trees having weight at most $t-1$.

    The reduction from \textsc{3-sat} to \textsc{max-min-bsf} is the same as previously described.
    If $G$ has a spanning 2-forest $\{T_1, T_2\}$ such that $w(T_1)\ge w(T_2)\geq t-1$, then $w(T_1)=w(T_2)=t-1$ since $G$ has $2t$ vertices.
    Therefore, using the same reasoning for the min-max case, we conclude that $\phi$ is a \textsc{yes}-instance of \textsc{3-sat} if and only if $G$ contains a spanning forest with two trees having weight at least~$t-1$. 

    Suppose now that $k>2$.
    First construct the graph~$G$ as before. 
    Then add to it $k-2$ paths, say~$P_3, P_4, \ldots, P_k$, each of them formed by $t$ newly created vertices, and link vertex~$a$ in $G$ to a single endpoint of each of these paths.
    Let $G'$ be the graph obtained from $G$ by this transformation, and note that $G'$ has exactly $kt$ vertices.
    If $G'$ admits a spanning $k$-forest~$\{T_i\}_{i\in [k]}$ with $\max_{i \in [k]} w(T_i)\leq t-1$ or $\min_{i \in [k]} w(T_i)\geq t-1$, then $w(T_i)=t-1$ for all $i \in [k]$.
    Thus, for every $i \in \{3,4, \ldots, k\}$,  $P_{i}$ is a tree in this forest.
    Assuming without loss of generality that $P_{i}=T_i$ for all $i \in \{3,4, \ldots, k\}$, the previous discussions for the case $k=2$ also imply that, for each fixed $k\ge 2$, $\phi$ is a \textsc{yes}-instance of \textsc{3-sat} if and only if $G$ contains a spanning forest with $k$ trees having weight exactly $t-1$.
 \end{proof}

\begin{figure}[t!]
\centering
    \centering
    \begin{tikzpicture}[
            auto,
            node distance = 1.3cm, 
            semithick, 
            scale=0.6, 
            every node/.style={scale=0.75}
        ]

        \tikzstyle{every state}=[
            draw = black,
            semithick,
            fill = white,
            minimum size = 5mm,
            inner sep=1pt
        ]

        \node[state] (xu) []{$u_x$};
        \node[state] (x) [above left of=xu] {$v_x$};
        \node[state] (xn) [below left of=xu] {$v_{\bar x}$};
        \node (xl1) [left=1cm of xu, shape=circle,scale=0.7, inner sep=0pt]{\rotatebox{90}{$\cdots$}};
        \node (xl2) [above=1mm of xl1, draw=black, shape=circle, inner sep=2pt]{};
        \node (xl3) [below=1mm of xl1, draw=black, shape=circle, inner sep=2pt]{};
        \node[draw,dotted, inner sep=6pt, fit=(xl1) (xl2) (xl3), label={[label distance=-1pt]180:$L_x$}] {};
        
        \node[state] (c1) [left=2cm of x] {$z_{1}$};
        \node [left=0.2cm] at (c1) {$(x\lor \bar y \lor z)$};
        \node[state] (c2) [left=2cm of xn] {$z_{2}$};
        \node [left=0.2cm] at (c2) {$(\bar x\lor \bar y \lor z)$};
        
        \node[state] (a) [above right=3cm of xu] {$a$};
         \node (al1) [right=1cm of a, shape=circle,scale=0.7, inner sep=0pt]{\rotatebox{90}{$\cdots$}};
        \node (al2) [above=1mm of al1, draw=black, shape=circle, inner sep=2pt]{};
        \node (al3) [below=1mm of al1, draw=black, shape=circle, inner sep=2pt]{};
        \node[draw,dotted, inner sep=6pt, fit=(al1) (al2) (al3), label={[label distance=-1pt]0:$L_a$}] {};
        
        \node[state] (b) [below right=3cm of xu] {$b$};
        \node (bl1) [right=1cm of b, shape=circle,scale=0.7, inner sep=0pt]{\rotatebox{90}{$\cdots$}};
        \node (bl2) [above=1mm of bl1, draw=black, shape=circle, inner sep=2pt]{};
        \node (bl3) [below=1mm of bl1, draw=black, shape=circle, inner sep=2pt]{};
        \node[draw,dotted, inner sep=6pt, fit=(bl1) (bl2) (bl3), label={[label distance=-1pt]0:$L_b$}] {};

        \node[state] (yn) [above of=x] {$v_{\bar y}$};
        \node[state] (yu) [above right of =yn]{$u_y$};
        \node[state] (y) [above left of=yu] {$v_y$};
        \node (yl1) [left=1cm of yu, shape=circle,scale=0.7, inner sep=0pt]{\rotatebox{90}{$\cdots$}};
        \node (yl2) [above=1mm of yl1, draw=black, shape=circle, inner sep=2pt]{};
        \node (yl3) [below=1mm of yl1, draw=black, shape=circle, inner sep=2pt]{};
        \node[draw,dotted, inner sep=6pt, fit=(yl1) (yl2) (yl3), label={[label distance=-1pt]180:$L_y$}] {};

        \node[state] (z) [below of=xn] {$v_{z}$};
        \node[state] (zu) [below right of =z]{$u_z$};
        \node[state] (zn) [below left of=zu] {$v_{\bar z}$};
        \node (zl1) [left=1cm of zu, shape=circle,scale=0.7, inner sep=0pt]{\rotatebox{90}{$\cdots$}};
        \node (zl2) [above=1mm of zl1, draw=black, shape=circle, inner sep=2pt]{};
        \node (zl3) [below=1mm of zl1, draw=black, shape=circle, inner sep=2pt]{};
        \node[draw,dotted, inner sep=6pt, fit=(zl1) (zl2) (zl3), label={[label distance=-1pt]180:$L_z$}] {};

        \path (c1) edge node{} (x);
        \path (c1) edge node{} (yn);
        \path (c1) edge node{} (z);

        \path (c2) edge node{} (xn);
        \path (c2) edge node{} (yn);
        \path (c2) edge node{} (z);

        \path (yu) edge node{} (y);
        \path (yu) edge node{} (yn);
        \path (yu) edge node{} (yl2);
        \path (yu) edge node{} (yl3);

        \path (xu) edge node{} (x);
        \path (xu) edge node{} (xn);
        \path (xu) edge node{} (xl2);
        \path (xu) edge node{} (xl3);

        \path (zu) edge node{} (z);
        \path (zu) edge node{} (zn);
        \path (zu) edge node{} (zl2);
        \path (zu) edge node{} (zl3);

        \path (a) edge node{} (y);
        \path (a) edge node{} (yn);
        \path (a) edge node{} (x);
        \path (a) edge[bend left=20] node{} (xn);
        \path (a) edge[bend left=25] node{} (z);
        \path (a) edge[bend left=35] node{} (zn);
        \path (a) edge node{} (al2);
        \path (a) edge node{} (al3);
        
        \path (b) edge[bend right=35] node{} (y);
        \path (b) edge[bend right=25] node{} (yn);
        \path (b) edge[bend right=20] node{} (x);
        \path (b) edge node{} (xn);
        \path (b) edge node{} (z);
        \path (b) edge node{} (zn);
        \path (b) edge node{} (bl2);
        \path (b) edge node{} (bl3);
    \end{tikzpicture}%
    \caption{Example of the transformation described in the proof of Theorem~\ref{thm:hardness:3sat} on input $(x\lor \bar y \lor z) \land (\bar x\lor \bar y \lor z)$.}
    \label{fig:reduction:3sat}
\end{figure}

In what follows, we show that the existence of a subexponential-time algorithm for \textsc{bsf} is very unlikely as such an algorithm would imply a breakthrough in satisfiability solving.
This is obtained under the \emph{Exponential-Time Hypothesis} introduced by Impagliazzo and Paturi~\cite{ImpPat01}.
The general strategy is to apply the \emph{Sparsification Lemma} by Impagliazzo et al.~\cite{ImpPatZan01} to transform \textsc{3-sat} instances into an equivalent set of instances whose number of clauses is linear in the number of variables, and then use the reduction presented in Lemma~\ref{lemma:reduction}.

\begin{conjecture}[\textit{Exponential-Time Hypothesis (ETH)}, \citet{ImpPat01}]
There exists an $\lambda > 0$ such that \textsc{3-sat} cannot be solved in $\bigO(2^{\lambda N})$ time on
$N$-variable instances.
\end{conjecture}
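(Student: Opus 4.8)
The plan is to recognize at the outset that the final statement is not a theorem awaiting a proof but rather the \emph{Exponential-Time Hypothesis} of \citet{ImpPat01}, which the paper typesets as a conjecture and adopts as a working assumption. Consequently, there is no proof to supply, nor would one be attempted: ETH is an unproven and, with present techniques, unprovable hypothesis. Indeed, establishing it would be a spectacular advance, since a proof that \textsc{3-sat} admits no $\bigO(2^{\lambda N})$-time algorithm for some fixed $\lambda>0$ would in particular rule out a polynomial-time algorithm, thereby settling $\Pclass \neq \NP$. In fact ETH is strictly stronger than $\Pclass \neq \NP$: the latter forbids only polynomial-time algorithms, whereas ETH forbids any algorithm running in time $2^{o(N)}$, and no known technique separates these two regimes.

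The most one can honestly offer in lieu of a proof is to delineate what a genuine proof would have to accomplish and to assemble the evidence that makes the hypothesis plausible. A proof would need to exhibit an unconditional exponential lower bound on the time complexity of \textsc{3-sat}, ruling out all subexponential algorithms simultaneously. This lies far beyond current lower-bound technology: the strongest unconditional circuit and time lower bounds known for problems in $\NP$ fall dramatically short of exponential, and no diagonalization or combinatorial argument is known that would force a $2^{\Omega(N)}$ running time for \textsc{3-sat}.

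On the positive side, the hypothesis is supported by decades of algorithmic work: the best known algorithms for \textsc{3-sat} run in time $\bigO(2^{cN})$ with a constant $c<1$ bounded away from $0$, and repeated effort has failed to drive $c$ toward $0$. Moreover, ETH is robust under the \emph{Sparsification Lemma} of \citet{ImpPatZan01}, which shows it to be equivalent to its formulation on instances whose number of clauses is linear in the number of variables; this equivalence is precisely what the paper will exploit, combining sparsification with the reduction in Theorem~\ref{thm:hardness:3sat} to transfer the conjectured hardness of \textsc{3-sat} into conditional lower bounds for \textsc{bsf}. The main obstacle, therefore, is not a gap in a proof one might hope to close but a fundamental one: because ETH sits strictly above the $\Pclass$ versus $\NP$ question, the statement is used here solely as a hypothesis from which the later subexponential lower bounds are derived, and no independent proof of it is given.
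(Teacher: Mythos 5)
You correctly identify that the statement is not a theorem but the Exponential-Time Hypothesis of Impagliazzo and Paturi, which the paper likewise states as a conjecture, cites, and adopts purely as a working assumption without any proof. Your treatment matches the paper's: no proof exists or is attempted, and the hypothesis serves only as the premise for the conditional lower bounds derived via the Sparsification Lemma and the reduction of Theorem~\ref{thm:hardness:3sat}.
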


\begin{lemma}[\textit{Sparsification Lemma}, \citet{ImpPatZan01}]\label{thm:sparsification}
Let $\gamma > 0 $, and let~$\phi$ be a \textsc{3-sat} instance with $N$ variables. 
There is a number $c = c(\gamma)$, and an algorithm that computes \textsc{3-sat} instances $\phi_1, \ldots, \phi_\ell$ from $\phi$ with~$\ell \leq 2^{\gamma N}$ in ~$\bigO(2^{\gamma N}\poly(N))$ time such that
\begin{enumerate}[(i)]
	\item $\phi_i$ has $N$ variables and at most~$cN$ clauses for every~$ i \in [\ell]$; and
 	\item $\phi$ is satisfiable if and only if $\phi_i$ is satisfiable for some $i \in [\ell]$.
\end{enumerate}
\end{lemma}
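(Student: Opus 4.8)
The plan is to realize the instances $\phi_1,\dots,\phi_\ell$ as the leaves of a binary branching tree rooted at $\phi$, where at each node we either certify that the current formula is already sparse (a leaf) or exploit structural redundancy to branch into two strictly simpler formulas. The invariant to maintain along the tree is that the current formula is a \textsc{3-sat} instance over the same $N$ variables and that it is equisatisfiable with the disjunction of the formulas sitting at the leaves below it; this immediately yields property~(ii) at the root. The whole difficulty is quantitative: to design the branching rule so that the tree has at most $2^{\gamma N}$ leaves while guaranteeing that every leaf carries at most $cN$ clauses.

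The branching rule is driven by sunflowers. Recall that a \emph{sunflower with core $H$ and $p$ petals} is a family of $p$ clauses, each containing the nonempty literal set $H$, whose \emph{petals} (the clauses with $H$ removed) are pairwise disjoint. First I would fix a petal threshold $p=p(\gamma)$ and, at a node carrying a formula $\psi$, search for a sunflower with $1\le |H|\le 2$ and at least $p$ petals; there are only $\bigO(N^2)$ candidate cores and petals can be gathered greedily, so the search costs $\poly(N)$. If such a sunflower is found, branch as follows. In one child replace $\psi$ by $\psi\wedge H$ (asserting the short clause $H$); since the $\ge p$ sunflower clauses all contain $H$, they become subsumed and are deleted, so the clause count drops by at least $p-1$ while only the short clause $H$ is added. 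In the other child set every literal of $H$ to \false{} and propagate: each sunflower clause loses its core and collapses to its shorter petal, and $|H|\ge 1$ variables become fixed (recorded by unit clauses so that the formula still ranges over all $N$ variables). The two children are exhaustive --- every assignment either satisfies $H$ or falsifies all of it --- so the equisatisfiability invariant is preserved and both children are again \textsc{3-sat} instances on $N$ variables.

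Next I would establish the two counting facts that make the bounds work. For the leaves, the terminating condition is the \emph{absence} of any sunflower with nonempty core and $p$ petals; under it a charging argument bounds the clause count linearly in $N$. Concretely, for clauses of a fixed width $i\le 3$ and each core size $t<i$, the ban on $p$-petal sunflowers forces every nonempty core to lie in fewer than $p$ clauses with pairwise-disjoint petals; summing these bounded degrees over the $\bigO(N)$ literals gives at most $c_i(p)\,N$ width-$i$ clauses, and adding over $i\le 3$ produces the clause bound $cN$ with $c=c(\gamma)$, which is property~(i). For the leaf count I would introduce a measure and analyze the two-way branching recurrence with the goal of a bound $2^{\gamma N}$ in the number of \emph{variables}: the subsumption branch removes at least $p-1$ clauses while the falsifying branch fixes $|H|\ge1$ variables, and the measure must reward both kinds of progress. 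The decisive point is that enlarging $p$ makes the subsumption branch delete ever more clauses at once, so that the effective branching number, measured against $N$, can be pushed below $2^{\gamma}$, yielding $\ell\le 2^{\gamma N}$. Since the tree is binary this also bounds its node count by $\bigO(2^{\gamma N})$, and with $\poly(N)$ work per node the total running time is $\bigO(2^{\gamma N}\poly(N))$.

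The main obstacle is exactly the simultaneous calibration of the single parameter $p$ together with a measure whose exponent scales with the number of variables $N$ rather than with the potentially $\Theta(N^3)$ number of clauses. Larger $p$ rarefies branching and pushes the leaf count toward $2^{\gamma N}$, but it weakens the terminating condition and thereby enlarges the sparsity constant $c(p)$ in the counting lemma; one must exhibit a single $p=p(\gamma)$ for which $\ell\le 2^{\gamma N}$ and $c=c(\gamma)<\infty$ hold at once. A further subtlety is cross-width bookkeeping: the falsifying branch turns width-$i$ clauses into width-$(i-|H|)$ clauses, so the measure must be arranged so that shorter clauses cascading into the lower-width stages cannot reignite unbounded branching, which is precisely what secures termination within the claimed bounds.
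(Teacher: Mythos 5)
First, a framing remark: the paper does not prove this lemma at all --- it is quoted from Impagliazzo, Paturi and Zane and used as a black box in the proof of Theorem~\ref{thm:lower_bound} --- so the only meaningful comparison is with the original proof in the cited work. Your sketch does follow the genuine IPZ route: a binary branching tree whose leaves are the $\phi_i$, sunflower-driven branching (assert the core $H$ as a clause versus falsify all of $H$), preservation of equisatisfiability along the tree, and a ``no large sunflower'' condition at the leaves that yields the $cN$ clause bound by a degree-counting argument. Those ingredients are correct, and given a bound of $2^{\gamma N}$ on the number of leaves, your time bound of $\bigO(2^{\gamma N}\poly(N))$ follows as you say.

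The genuine gap is the leaf bound itself, which you pose as a calibration task (``one must exhibit a single $p=p(\gamma)$ together with a measure'') rather than establish --- and this is not a routine calibration; it is the entire content of the lemma. Concretely, with a single petal threshold $p$, the core-assertion branch fixes no variables: for a core $H$ of size $2$ it merely adds the $2$-clause $H$ and deletes at least $p$ width-$3$ clauses, while the falsifying branch fixes roughly $p+2$ variables (the two core literals plus the unit petals). In the natural measure ``number of free variables'' the branching vector is $(0,\,p+2)$, whose branching number is unbounded; and since each pair of literals can serve as a core at most once, a root-to-leaf path may contain $\Omega(N^2)$ of these variable-free branchings, so no measure of the form (free variables) plus a bounded correction obviously keeps the leaf count below $2^{\gamma N}$. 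Measuring the deleted clauses instead runs into the $\Theta(N^3)$ clause count you yourself flag. The actual IPZ argument overcomes this with machinery your sketch does not contain: branching is prioritized by core size (unit-core sunflowers, in which \emph{both} children fix variables, are processed first, so no literal ever accumulates $p$ or more $2$-clauses), the petal thresholds are width-dependent rather than a single parameter, and the creation of short clauses by core branches is amortized against the later variable-fixing branchings that consume them. Your closing paragraph correctly names both difficulties --- the clause-versus-variable mismatch and the cross-width cascade --- but identifying the obstacle is not the same as removing it; as written, the decisive quantitative step of the proof is missing, and the one-parameter scheme you propose does not suffice without that additional amortized accounting.
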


\begin{lemma}\label{lemma:reduction}
    Let $k\ge 2$ be an integer.
    Let~$\phi$ be an instance of~\textsc{3-sat} with~$N$ variables and~$M$ clauses.
	 There exists an algorithm that computes a weighted graph~$(G,w)$ from  $\phi$ in $\bigO(N+M)$ time such that 
    \begin{enumerate}[(i)]
        \item $G$ is bipartite and contains $\bigO(N+M)$ vertices and edges;
        \item $(G,w)$ has a spanning $k$-forest $\{T_i\}_{i\in [k]}$ with $w(T_i)=(N+1)(2N+M)-1$ for every $i \in [k]$ if and only if $\phi$ is satisfiable.
    \end{enumerate}
\end{lemma}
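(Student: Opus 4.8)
The plan is to reuse the reduction of Theorem~\ref{thm:hardness:3sat} almost verbatim, but to replace every large padding set of unit-weight pendant edges by a \emph{single} heavy pendant edge, so that the vertex count drops from $\Theta(t)$, where $t := (N+1)(2N+M)$, to $\bigO(N+M)$ while all the weight-counting arguments are preserved. Concretely, for $k=2$ I would build, for each variable $x$, the vertices $v_x, v_{\bar x}, u_x$ together with one new vertex $\ell_x$ joined to $u_x$; for each clause $C$ a vertex $z_C$ adjacent to the literal-vertices occurring in $C$; two vertices $a,b$ each adjacent to all $v_x, v_{\bar x}$; and single pendant vertices $\ell_a,\ell_b$ attached to $a$ and $b$. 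I would set $W_x := w(\{u_x,\ell_x\}) = 2N+M-1$, $W_a := w(\{a,\ell_a\}) = t-N-M-1$, $W_b := w(\{b,\ell_b\}) = N+M-1$, and give every other edge weight $1$. These three values are exactly the cardinalities of the sets $L_x, L_a, L_b$ of Theorem~\ref{thm:hardness:3sat}, so each heavy edge carries precisely the weight that the corresponding padding set used to carry. Bipartiteness is retained by placing $a,b,u_x,z_C$ on one side and $v_x,v_{\bar x},\ell_a,\ell_b,\ell_x$ on the other, a direct count gives $\bigO(N+M)$ vertices and edges, and since the numbers involved are $\bigO(N^2)$ everything is computable in $\bigO(N+M)$ time.

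For the correctness of part~(ii) with $k=2$ I would run the same case analysis as in Theorem~\ref{thm:hardness:3sat}, replacing each ``number of vertices of a tree'' estimate by the corresponding ``weight of a tree'' estimate. The pendant vertices force their heavy edges into the trees of their unique neighbours, and the chosen weights make the target $t-1$ tight: since $W_a+W_b=t-2$ and $a,b$ are non-adjacent, $a$ and $b$ must lie in different trees; since $W_a+W_x=t+N-2>t-1$, every $u_x$ must lie in the $b$-tree; a $b$-tree that misses both $v_x$ and $v_{\bar x}$ leaves $u_x$ in the isolated component $\{u_x,\ell_x\}$, so at least one of $v_x,v_{\bar x}$ lies in the $b$-tree, while an $a$-tree missing both has weight at most $W_a+(N-1)+M=t-2$. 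A short computation, namely $W_b + 2N + N(2N+M-1) = t-1$, shows that the $b$-tree already attains weight exactly $t-1$ \emph{without} any clause vertex, forcing every $z_C$ into the $a$-tree and hence forcing each clause to contain a true literal. The converse reuses the explicit partition $\{V_1,V_2\}$ from Theorem~\ref{thm:hardness:3sat}, and the same arithmetic confirms $w(T_1)=w(T_2)=t-1$.

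To handle a fixed $k>2$ I would attach $k-2$ gadgets to $a$, collapsing each of the paths of Theorem~\ref{thm:hardness:3sat} into a single heavy edge: add vertices $p_i,q_i$ with $q_i$ pendant on $p_i$, set $w(\{p_i,q_i\}) = t-1$, and join $a$ to $p_i$ by a unit edge used only to keep $G$ connected. Because $q_i$ has degree one, the edge $\{p_i,q_i\}$ of weight $t-1$ must lie in the tree containing $q_i$, so that tree has weight at least $t-1$; equality forces it to be exactly $\{p_i,q_i\}$ and leaves $\{a,p_i\}$ unused. Thus any $k$-forest all of whose trees have weight exactly $t-1$ consists of these $k-2$ single-edge trees together with a $2$-forest on the core graph, reducing the analysis to the case $k=2$. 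Since $k$ is fixed, the gadgets add only $\bigO(1)$ vertices and edges, so the size and time bounds of part~(i) are unaffected.

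The step I expect to be the main obstacle is verifying that the three heavy weights are simultaneously ``just heavy enough'' and ``just light enough'': misplacing a single vertex must overshoot the budget $t-1$, yet the intended placement must meet it exactly. This reduces to checking the separating inequalities $W_a+W_b=t-2$, $W_a+W_x=t+N-2$, together with the identity $W_b + 2N + N(2N+M-1) = t-1$ that pins down the honest solution; the degenerate case $N=1$ can be settled by inspection.
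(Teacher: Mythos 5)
Your proposal is correct and matches the paper's proof essentially verbatim: the paper likewise compresses each padding set $L_v$ into a single pendant vertex joined by an edge of weight $|L_v|$ and each path $P_i$ into a two-vertex gadget $\{p_i,q_i\}$ with edge weights $(N+1)(2N+M)-1$ and $1$, then inherits correctness from Theorem~\ref{thm:hardness:3sat}. Your extra weight-based case analysis (including the explicit identity $W_b+2N+N(2N+M-1)=t-1$ and the flagged $N=1$ degeneracy, which is avoidable by counting the two connecting edges) is sound but not needed beyond citing that theorem's argument.
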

\begin{proof}
    The reduction described in the proof of Theorem~\ref{thm:hardness:3sat} builds an unweighted graph $G'$ from~$\phi=(X,\mathcal{C})$ where each of the vertices in $U:=\{a,b\}\cup\{u_x : x \in X\}$ is adjacent to a potentially large (yet polynomial on the size of $\phi$) number of vertices of degree~1.
    For each $v \in U$, we replace all vertices in $L_v$ by a single vertex $p_v$, and add an edge $\{v,p_v\}$ of weight $|L_v|$.
    Additionally, each path $P_i$ with $i \in \{3,\ldots, k\}$ is replaced by two newly created vertices $p_i,q_i$, and edges $\{p_i,q_i\}$ and $\{p_i,a\}$  with weights~$(N+1)(2N+M)-1$ and $1$, respectively.
    Let $G$ be the graph obtained from this transformation on $G'$, and observe that $G$ contains $4N+M+4 + 2(k-2)$ vertices, 
    $7N+3M+2 + 2(k-2)$ edges, 
    and it is bipartite as $G'$ is also bipartite.
    Hence, this algorithm takes $\bigO(N+M)$ time since $k$ is constant.
    It follows from the proof of Theorem~\ref{thm:hardness:3sat} that $(G,w)$ has a spanning $k$-forest $\{T_i\}_{i \in [k]}$ with $w(T_i)=(N+1)(2N+M)-1$ if and only if $\phi$ is satisfiable.
 \end{proof}

\begin{theorem}\label{thm:lower_bound}
	For each integer~$k\ge2$,
    there exists an $\varepsilon> 0$ such that neither \textsc{max-min-bsf} nor \textsc{min-max-bsf}  can be solved in  $\bigO(2^{\varepsilon n})$ time on $n$-vertex bipartite graphs, unless ETH fails.
\end{theorem}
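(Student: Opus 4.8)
The plan is to prove the contrapositive: assuming ETH holds with a constant $\lambda>0$ (the constant appearing in the statement of ETH), I will exhibit an $\varepsilon>0$ for which an $\bigO(2^{\varepsilon n})$-time algorithm for \textsc{bsf} on $n$-vertex bipartite graphs would decide $N$-variable \textsc{3-sat} in $\bigO(2^{\lambda N})$ time, contradicting ETH. The value of $\varepsilon$ will be defined in terms of $\lambda$ and of the constants produced along the reduction chain; crucially, all of those constants are pinned down \emph{before} $\varepsilon$ is chosen, so there is no circularity. The argument works identically for \textsc{min-max-bsf} and \textsc{max-min-bsf}, because the combinatorial equivalence supplied by Lemma~\ref{lemma:reduction} is stated as an \emph{exact} target weight.

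First I would fix $\gamma := \lambda/3$ and feed a given \textsc{3-sat} instance $\phi$ on $N$ variables to the Sparsification Lemma (Lemma~\ref{thm:sparsification}). This produces, in $\bigO(2^{\gamma N}\poly(N))$ time, a family $\phi_1,\dots,\phi_\ell$ with $\ell\le 2^{\gamma N}$, each having $N$ variables and at most $cN$ clauses for the constant $c=c(\gamma)$, such that $\phi$ is satisfiable if and only if some $\phi_i$ is. Next I would apply the reduction of Lemma~\ref{lemma:reduction} to every $\phi_i$, obtaining in $\bigO(N)$ time per instance a bipartite graph $(G_i,w_i)$ on $n_i = \bigO(N + cN)$ vertices and edges; since $\gamma$, hence $c$, is a fixed constant, there is a constant $\alpha=\alpha(\lambda)$ with $n_i\le \alpha N$ for all $i$. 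By Lemma~\ref{lemma:reduction}(ii), $\phi_i$ is satisfiable if and only if $(G_i,w_i)$ admits a spanning $k$-forest whose trees all have weight exactly $(N+1)(2N+M_i)-1$, where $M_i\le cN$ is the number of clauses of $\phi_i$. This one condition is decided both by the \textsc{min-max-bsf} query with threshold $(N+1)(2N+M_i)-1$ and by the \textsc{max-min-bsf} query with the same threshold. Hence running the assumed \textsc{bsf} algorithm on each $G_i$ and taking the disjunction of the answers decides whether $\phi$ is satisfiable, and the reasoning is literally the same for both versions.

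It remains to balance the running time. The sparsification step costs $\bigO(2^{\gamma N}\poly(N))$, building the $\ell$ graphs costs an additional $\poly(N)$ factor, and solving the $\ell\le 2^{\gamma N}$ instances costs $\ell\cdot\bigO(2^{\varepsilon n_i}) \le \bigO(2^{\gamma N}\cdot 2^{\varepsilon\alpha N})$. Setting $\varepsilon := \lambda/(3\alpha)$ makes the dominant exponent equal to $(\gamma+\varepsilon\alpha)N = \tfrac{2}{3}\lambda N$, so the total running time is $\poly(N)\cdot 2^{(2/3)\lambda N} = \bigO(2^{\lambda N})$, which contradicts ETH. Consequently no $\bigO(2^{\varepsilon n})$-time algorithm for \textsc{bsf} can exist for this $\varepsilon$, proving the theorem. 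The step I expect to require the most care is exactly this constant bookkeeping: one must choose $\gamma$ (and through it $c$ and $\alpha$) before fixing $\varepsilon$ so that $\gamma+\varepsilon\alpha$ stays strictly below $\lambda$, and one must check that the polynomial overheads of both the sparsification and the graph construction are absorbed into the $\bigO(2^{\lambda N})$ bound. All of the graph-theoretic content is already delivered by Lemmas~\ref{thm:sparsification} and~\ref{lemma:reduction}, so no further structural argument is needed.
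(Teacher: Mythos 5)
Your proof is correct and follows essentially the same route as the paper: sparsify with a constant $\gamma$ tied to the target exponent, apply Lemma~\ref{lemma:reduction} to each sparsified instance, run the assumed subexponential \textsc{bsf} algorithm on all of them (using the exact-weight equivalence so that both the min-max and max-min queries decide satisfiability), and balance the exponents. The only difference is cosmetic quantifier bookkeeping: the paper argues by contradiction, instantiating its ``for every $\varepsilon$'' assumption at $\rho = \varepsilon/(2(4+c))$, whereas you fix the ETH constant $\lambda$ first and exhibit the explicit value $\varepsilon = \lambda/(3\alpha)$ — the underlying reduction and time analysis are identical.
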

\begin{proof}
    Let~$k\ge2$ be an integer.
    Let $\phi$ be a \textsc{3-sat} instance with $N>2$ variables.
    Suppose to the contrary that either \textsc{max-min-bsf} or \textsc{min-max-bsf} restricted to instances consisting of $n$-vertex bipartite graphs  (with $k$ fixed) can be solved in  $\bigO(2^{\varepsilon n})$ time for every $\varepsilon>0$.
    Let $\varepsilon>0$, define $\gamma = \varepsilon/2$, and apply the Sparsification Lemma on $\phi$.
    Let  $c = c(\gamma)$ be a constant, and let~$\phi_1,\ldots, \phi_\ell$ with $\ell \leq 2^{\gamma N}$ be the \textsc{3-sat} instances computed from $\phi$ in~$\bigO(2^{\gamma N}\poly(N))$ time  as in the statement of Lemma~\ref{thm:sparsification}.
    For every~$i \in [\ell]$,  let $M_i$ denote the number of clauses of~$\phi_i$. 
    
    For each~$i \in [\ell]$,  construct
  a weighted graph~$(G_i,w_i)$ from~$\phi_i$ as in Lemma~\ref{lemma:reduction}, and define $t_i=(N+1)(2N+M_i)-1$. 
  Since $\phi_i$ has $N$ variables and at most $cN$ clauses, the size of $G_i$ is $\bigO(N)$.
  Let $\rho= {\varepsilon}/({2(4+c)})$, and let~$\mathcal{A}$ be an algorithm that solves one of the two versions of \textsc{bsf} in~$\bigO(2^{\rho n})$ time, where $n$ is the number of vertices of the input graph. 
  Hence, the running-time of~$\mathcal{A}$ on input~$(G_i, w_i, k, t_i)$ is
	\(\bigO(2^{\rho |V(G_i)|}) = \bigO(2^{\rho (4+c)N}) = \bigO(2^{\eps N/2})\), where the first equation holds since the number of vertices of  $G_i$ is $\bigO(M_i+4N)$ as in the proof of Lemma~\ref{lemma:reduction}.
 By Lemmas~\ref{thm:sparsification} and~\ref{lemma:reduction},  the satisfiability of $\phi$ can be decided by running~$\mathcal{A}$ on input~$(G_i, w_i, k, t_i)$ for every~$i \in [\ell]$. 
 This algorithm takes time
        \(\bigO(\ell 2^{\eps N/2}) = \bigO(2^{\gamma N} 2^{\eps N/2}) =\bigO(2^{\eps N}).\)
	Therefore, one can solve \textsc{3-sat} in $\bigO(2^{\eps N})$ time on $N$-variable instances for
    every~$\eps >0$, a contradiction to ETH.
 \end{proof}


\section{A $k$-approximation algorithm and heuristics }
\label{sec:heur}

In this section, we first show a simple (although tight) approximation algorithm for~\textsc{min-max-bsf} that constructs a spanning $k$-forest by removing $k-1$ edges of a minimum weighted spanning tree of the graph.
To the best of our knowledge, this is the first approximation for \textsc{min-max-bsf} in the literature.
This approximation is then used to design heuristics that provide initial solutions to the solving methods proposed in the following sections.

\label{sec:approximation}
\begin{proposition}\label{prop:approx}
    \textsc{min-max-bsf} admits a tight $k$-approximation algorithm, which runs in $\bigO(m \log n)$ time on graphs with $n$ vertices and $m$ edges.
\end{proposition}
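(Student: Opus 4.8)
The plan is to analyze the natural algorithm suggested in the section preamble: compute a minimum spanning tree $T$ of $(G,w)$ and delete its $k-1$ heaviest edges, returning the resulting spanning $k$-forest $\mathcal{F}=\{T_1,\dots,T_k\}$. I would first settle correctness and running time. Deleting one edge of a tree splits it into two trees, so deleting $k-1$ edges of $T$ yields exactly $k$ vertex-disjoint trees that together span $V(G)$; a minimum spanning tree is computed in $\bigO(m\log n)$ time (e.g.\ Kruskal or Prim with a binary heap), and selecting the $k-1$ heaviest among the $n-1$ tree edges costs $\bigO(n\log n)$, which is dominated by $\bigO(m\log n)$ since $G$ is connected and thus $m\ge n-1$.

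The core of the argument is a sandwich bound between the algorithm's value and the optimum. The key structural fact I would invoke is that deleting the $k-1$ heaviest edges of an MST produces a \emph{minimum total-weight} spanning $k$-forest; this is the standard greedy/matroid statement obtained by running Kruskal's algorithm until exactly $k$ components remain. Writing $W$ for this minimum total weight, two inequalities follow almost immediately. On one hand, because the weights are nonnegative, the heaviest tree in $\mathcal{F}$ has weight at most the total, so $\mathrm{ALG}=\max_i w(T_i)\le \sum_i w(T_i)=W$. On the other hand, any spanning $k$-forest---in particular an optimal \textsc{min-max-bsf} solution $\{T_i^\ast\}$---has total weight at least $W$; hence $k\cdot\opt = k\max_i w(T_i^\ast)\ge \sum_i w(T_i^\ast)\ge W$, giving $\opt\ge W/k$. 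Chaining these yields $\mathrm{ALG}\le W\le k\cdot\opt$, the claimed $k$-approximation.

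For tightness I would exhibit a family of path instances. Take the path $v_0v_1\cdots v_n$ with the edge $v_{i-1}v_i$ having weight $1+i\eps$ for a tiny $\eps>0$; the path is the unique spanning tree, so it is forced to be the MST. The $k-1$ heaviest edges are the last $k-1$ edges, and removing them detaches $v_n,\dots,v_{n-k+2}$ as singletons, leaving one heavy subpath of weight $\approx n-k+1$; thus $\mathrm{ALG}\approx n-k+1$. By contrast, cutting the path into $k$ balanced contiguous pieces gives $\opt\approx n/k$, so the ratio $\mathrm{ALG}/\opt\to k$ as $n\to\infty$, certifying that the factor $k$ cannot be improved. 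The small perturbation $\eps$ (or, equivalently, an adversarial tie-break among unit weights) is what forces the algorithm into the unbalanced cut.

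The step I expect to be the main (though modest) obstacle is the structural lemma that ``MST minus its $k-1$ heaviest edges'' coincides with a minimum total-weight spanning $k$-forest: it is what simultaneously powers the upper bound (via $\opt\ge W/k$) and guarantees that the tight instance behaves adversarially. Everything else is a short computation, provided one is careful that the nonnegativity assumption $w\colon E(G)\to\Q_{\ge}$ is used exactly once, in the inequality $\mathrm{ALG}\le W$.
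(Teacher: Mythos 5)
Your proof is correct, analyzes the same algorithm (MST minus its $k-1$ heaviest edges), and has the same running-time analysis, but the approximation argument and the tightness family both differ from the paper's. For the ratio, your key lemma is the stronger matroid fact that the output forest is a \emph{minimum total-weight} spanning $k$-forest (greedy on the truncated graphic matroid, i.e.\ Kruskal stopped at $k$ components), from which the sandwich $\mathrm{ALG}\le W\le \sum_i w(T_i^*)\le k\cdot\opt$ follows; the paper gets away with a single augmentation step instead: it picks $k-1$ edges $E^*\subseteq E(T)$ completing the optimal forest $F^*$ to a spanning tree $H$, and chains $\mathrm{ALG}\le w(T)-\sum_{e\in E'}w(e)\le w(H)-\sum_{e\in E'}w(e)\le k\cdot\opt+\sum_{e\in E^*}w(e)-\sum_{e\in E'}w(e)\le k\cdot\opt$, using only that $E'$ consists of the $k-1$ heaviest edges of $T$. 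Your route is arguably cleaner and proves more (total-weight optimality of the output), at the cost of citing the truncated-matroid greedy theorem rather than proving what is needed from scratch. For tightness, the paper uses an unweighted spider (each of $k$ paths of length $2$ attached by an edge to $u_k$), where an \emph{adversarial tie-break} among unit weights makes the algorithm delete $k-1$ leaf edges, giving value $2k$ against $\opt=2$, i.e.\ ratio exactly $k$; your perturbed path forces the bad behavior deterministically (no tie-breaking assumption), which is a stronger statement about the algorithm as specified, but only achieves ratio tending to $k$ rather than equal to $k$. Both certify tightness; just note that your example needs $\eps$ small relative to $1/n$ so that the perturbation does not distort $\opt\approx n/k$, a point worth making explicit.
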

\begin{proof}
    Let $(G,w,k)$ be an instance of \textsc{min-max-bsf}, where $G$ is a graph with $n$ vertices and $m$ edges.
    First, the algorithm computes a minimum-weight spanning tree $T$ of $(G,w)$ in $\bigO(m \log n)$ time using Kruskal's algorithm.
    Since this algorithm sorts the edges by their weights in a non-decreasing fashion, one may easily obtain a set $E'$ containing the $k-1$ heaviest edges in $T$.
    Finally, the algorithm outputs the spanning forest, say~$F$, obtained from $T$ by removing the edges in $E'$.
    Obviously, the weight of the heaviest tree in $F$ is at most $w(T)- \sum_{e \in E'} w(e)$.

    Consider now an optimal spanning $k$-forest~$F^*$ of $(G,w,k)$ such that its heaviest tree has weight equal to $\opt$.
    Let $E^*$ be a subset of $k-1$ edges of the spanning tree $T$ such that~$F^*$ plus these edges yield a spanning tree~$H$ of $G$.
    Clearly, it holds that $w(T) \leq w(H) \leq k\cdot \opt + \sum_{e \in E^*} w(e)$.
    Hence, to conclude that the algorithm described above is a $k$-approximation, it suffices to observe that $\sum_{e \in E'} w(e) - \sum_{e \in E^*} w(e) \ge 0$. 
    Therefore, the weight of the heaviest tree in $F$ is at most~$k \cdot \opt$.

The analysis of the greedy algorithm described above is tight as shown by the following family of \emph{unweighted} instances.
For every integer $k\ge 2$, define a graph $G$ which is obtained from the disjoint union of~$k$ paths~$\{P_i\}_{i \in [k]}$ of length~$2$, and $k-1$ edges $\{u_k, u_i\}$ with $i \in [k-1]$, where $u_i$ is an arbitrary (but fixed) endpoint of $P_i$ for all $i \in [k]$.
Observe that $G$ is a tree having exactly $k$ leaves and $3k-1$ edges, and that the greedy algorithm can choose to remove precisely $k-1$ of the edges incident with these vertices.
In this case, the solution produced by the greedy algorithm has value $2k$.
On the other hand, the optimal value for the \textsc{min-max-bsf} instance $(G,k)$ is $2$. 
 \end{proof}

It seems that selecting the edges to remove from the minimum spanning tree more carefully in the proof of Proposition~\ref{prop:approx} could yield better quality  solutions. 
To this end, we can use the algorithm devised by~\citet{vaishali2018} for \textsc{min-max-bsf} restricted to trees in $\bigO(kn^3)$ time. 
This is our first heuristic (running Algorithm~\ref{alg:heur} only for $u=0$). 
The following family of instances demonstrates that this heuristic may lead to solutions with approximation ratios arbitrarily close to $k/2$.

Let $k\geq 4$ be an even integer number, and let~$\tau \geq 2$. 
We construct a graph $G$ by taking the union of~$k$ disjoint  paths~$\{P_i\}_{i \in [k]}$ of length~$3$, where $u_i$ and $v_i$ are the endpoints of path $P_i$ for $i\in [k]$, $k-1$ edges $\{u_k, u_i\}$ with $i \in [k-1]$ and $k-1$ edges $\{v_k, v_i\}$ with $i \in [k-1].$ The edges on the paths ~$\{P_i\}_{i \in [k]}$ except the middle ones have weight 1 and these middle edges have weight $\tau$. 
The edges $\{u_k, u_i\}$ and  $\{v_k, v_i\}$ with $i \in [k-1]$  have weight $\tau$.
Let~$w \colon E(G) \to \Z_>$ denote the weight of the edges of $G$ as defined above.

Removing the middle edges on $P_i$ for $i\in [k-1]$ of weight $\tau$ produces a minimum-weight spanning tree, say~$T$. 
The following forest is an optimal solution to the \textsc{min-max-bsf} instance $(T,k,w)$: $k-2$ trees of weight 1 and two trees, one including $u_k$ and the other including $v_k$, of weight $k(\tau+1)/2 +1$. 
Note that the optimal value of~$(G,k,w)$ is precisely
\(\tau+2\).
Hence, optimally solving \textsc{min-max-bsf} on a minimum spanning tree of the input graph $G$ produces a solution of value tending to~$\frac{k}{2}\opt$ as $\tau$ increases.

Our second heuristic applies the first heuristic on different initial graphs that are obtaining by removing some edges. 
The steps are detailed in Algorithm~\ref{alg:heur}. Starting from the initial graph with no forbidden edges, we iteratively compute a minimum spanning tree \( T_u \) excluding the forbidden set \( H_u \). If the lower bound \( \mathrm{LB}_u = \lceil w(T_u)/k \rceil \) is less than the current global upper bound \( \mathrm{UB} \), we use \citeauthor{vaishali2018}'s algorithm to find a minimum spanning (MSKF) \( k \)-forest \( F_u \) for $T_u$, from which an upper bound \( \mathrm{UB}_u \) is computed. 
The algorithm proceeds by generating child nodes where each edge \( e \in T_u \) is added to the forbidden set.

\begin{algorithm}[t]
\caption{Heuristic for \textsc{min-max-bsf}}
\label{alg:heur}
\begin{algorithmic}[1]
\State Initialize $UB \gets \infty$, node number \( u = 0 \)
and  forbidden edge set for node $0$, \( H_0 \gets \emptyset \)
\State Compute \( G_{0} \gets G \setminus H_{0} \)
\State Compute \( T_{0} \gets \text{MST}(G_{0}) \) and \(\mathrm{LB}_{0} = \left\lceil \frac{w(T_0)}{k} \right\rceil\)         \Comment{Kruskal's algorithm}
\State Insert node \( u = 0 \) into queue \( \mathcal{Q} \)
\While{\( \mathcal{Q} \neq \emptyset \) and terminal criteria is not met}
    \State Pop node \( u \) from \( \mathcal{Q} \) with smallest \(\mathrm{LB}_{u}\)    
    \If{\( \mathrm{LB}_u < \mathrm{UB} \)}
    \State Let  \( F_u =\{F_{u,i}\}_{i\in [k]} \gets \text{MSKF}(T_u)\)
    \Comment{\citeauthor{vaishali2018}'s algorithm}
    \State Let \( \text{UB}_u \gets \max_{i\in[k]} w(F_{u,i}) \)
    \If{\( \text{UB}_u < \text{ UB} \)}
        \State \(\text{UB} \gets \text{UB}_u\)
    \EndIf
    \ForAll{edges \( e \in T_u \)}
        \State Create child node \( u' \) with \( H_{u'} \gets H_u \cup \{e\} \)
        \State Compute \( G_{u'} \gets G \setminus H_{u'} \)
        \State Compute \( T_{u'} \gets \text{MST}(G_{u'}) \)
        \If{\(T_{u'}\) exists and \( \mathrm{LB}_{u'} = \left\lceil \frac{w(T_u)}{k} \right\rceil < \mathrm{UB} \)}
            \State Insert node \( u' \) into \( \mathcal{Q} \)
        \EndIf        
    \EndFor
    \EndIf
\EndWhile
\end{algorithmic}
\end{algorithm}

In practice, this heuristic often yields a tight upper bound. 
We shall see in Section~\ref{sec:experiments} that this heuristic often provides very good solutions.
Therefore, it is used to produce primal solutions for each of the solving methods proposed in this work.

The strategy used in the proposed approximation for \textsc{min-max-bsf} does not seem to work for the max-min variant of the problem.
Indeed, we are not aware of any approximation for \textsc{max-min-bsf}.

\section{Integer linear programming formulations}
\label{sec:formulations}

We next show three integer linear programming formulations for both the max-min and the min-max variants of \textsc{bsf}.
The first one uses flows, and contains a polynomial number of variables and constraints.
The second model has a polynomial number of variables and uses cycle-elimination constraints.
The third one contains a binary variable for every possible tree in the input graph.

\subsection{Flow formulations} \label{sec:form:flow}
We first present a compact formulation for \textsc{min-max-bsf} that models a solution as a collection of arborescences -- rooted trees where every edge is oriented away from the root -- in an auxiliary directed graph. 
Let $(G=(V,E),w,k)$ be an instance of this problem.
We define $A=\{(u,v),(v,u): \{u,v\}\in E\}$ to be the set of all arcs corresponding to edges in $E$.
We abuse the notation and use $w(u,v)=w(v,u)=w(\{u,v\})$ for all $\{u,v\}\in E$. 

For each $v \in V$, the formulation has binary variable $y_v$ which represents if $v$ is the root of an arborescence in the spanning $k$~forest, and a non-negative variable $g_v$ to be the weight of the arborescence rooted at $v$ if $v$ is a root, and zero otherwise.
For each $a \in A$,  the model contains a binary variable $z_a$ that indicates if arc $a$ is part of the forest and a non-negative variable $f_a$ that represents the amount of flow on arc $a\in A$. 
The formulation also has a non-negative variable $\omega$ which corresponds to an upper bound on the weight of the heaviest tree in any solution.  

For $S\subseteq V$, we define  $\delta^+(S)=\{(u,v)\in A: u\in S, v\in V\setminus S\}$, $\delta^-(S)=\{(u,v)\in A: u\in V\setminus S, v\in S\}$ and $A(S)=\{(u,v)\in A: u\in S, v\in S\}$ in the graph $(V,A)$. If $S=\{v\}$, we use $\delta^-(v)$ and $\delta^+(v)$ instead of $\delta^-(\{v\})$ and $\delta^+(\{v\})$.

Let $\UB \in \Z$ be an upper bound on the value of an optimal solution to $(G,w,k)$. 
The flow model for \textsc{min-max-bsf} on this instance is as follows. 
\begin{align} 
        \min \: &  \omega &  \label{o} \\
        \text{s.t.} \: & \omega \geq g_v     & \forall v\in V\label{s1},\\
& \sum_{v\in V} y_v =k, & \label{s2} \\
        & \sum_{a \in \delta^-(v)} z_a +y_v =1 & \forall v \in V, \label{s3}\\
        & g_v+\sum_{a \in \delta^-(v)} f_a - \sum_{a \in \delta^+(v)} f_a = \sum_{a \in \delta^-(v)} w(a) z_a & \forall v \in V, \label{s5}\\
        &w(a) z_{a} \leq  f_{a}\leq \UB  z_{a} & \forall a \in A, \label{s6}\\
        &0 \leq  g_v\leq \UB  y_v & \forall v \in V, \label{s6p}\\
        & z_{a} \in \{0,1\} & \forall a \in A, \label{s8}\\
        & y_v \in \{0,1\} & \forall v \in V, \label{s9} \\
        & \omega  \ge 0. & \label{s10} 
\end{align}
Constraints \eqref{s2} ensure that a solution has exactly~$k$ vertices as roots.
Constraints \eqref{s3} guarantee that every vertex $v\in V$ is either a root or has exactly one incoming arc. 
Constraints \eqref{s6} bound~$f_a$ between $w(a)$ and $\UB$ if arc $a\in A$ is part of the forest, and force $f_a$ to be 0 otherwise. 
Similarly, constraints \eqref{s6p} force $g_v$ to be 0 if vertex $v\in V$ is not a root. 
The flow conservation constraints \eqref{s5}  model the weights of trees as follows.
If a vertex $v\in V$ is not a root, then the net flow into $v$ (incoming flow minus outgoing flow) is equal to the weight of the single incoming arc of~$v$. 
If~$v$ is a root, then $v$ has no incoming arc and $g_v$ is equal to sum of the flows on the outgoing arcs of~$v$. 
The sum of these flows is equal to the weight of the tree rooted at~$v$. 
The objective function \eqref{o} and constraints \eqref{s1} ensure that $\omega$ equals the weight of the heaviest tree. 

We strengthen the above model with the following valid inequalities: 
    \(\omega \geq \frac{1}{k}\sum_{v\in V}g_v\),
and $z_{uv}+z_{vu}\leq 1$ for all $\{u,v\}\in E$.

Assuming an order~$\prec$ on the vertices of $G$, we also include the following 
inequalities 
\begin{align*}
    y_v + z_{vu} \leq 1 \quad (v,u)\in A: u\prec v
\end{align*}
to eliminate some symmetries by not allowing to use arcs outgoing from a root to lower indexed vertices.

Regarding \textsc{max-min-bsf}, one can model it as follows. 
\begin{align} 
        \max \: &  \omega &  \label{op} \\
        \text{s.t.} \: &\text{\eqref{s2} -- \eqref{s10},} \notag\\
        & \omega \leq g_v +\UB(1-y_v)     & \forall v\in V,\label{s1p}
\end{align}
 where $\UB$ is an upper bound for \textsc{max-min-bsf}. 
 These big-M constraints can be avoided by replacing~\eqref{s1p} by
\begin{align} 
        & \omega \leq \sum_{v\in V} \theta_{vi} \quad \forall  i \in [k], \label{ek1}\\
        & g_v = \sum_{i\in [k]}\theta_{vi} \quad \forall v\in V,\label{ek2}\\
        & \theta_{vi} \geq 0 \quad \forall v\in V, i\in [k].\label{ek3}
\end{align}
Observe that summing \eqref{ek1} over $i\in [k]$ and substituting \eqref{ek2} implies 
$$ \omega \leq \frac{1}{k}\sum_{v\in V}\sum_{i\in [k]}\theta_{vi}=\frac{1}{k}\sum_{v\in V}g_v.$$

\subsection{Cycle elimination formulations}\label{sec:form:elimination}

We next show a natural assignment formulation for \textsc{min-max-bsf}, which only uses binary edge variables to represent a spanning forest.
Given an instance $(G,w,k)$,  this model contains a binary variable $x_{ei}$ for every $e \in E(G)$ and $i \in [k]$ indicating that edge~$e$ belongs to the $i$-th tree in the forest.
Before presenting our model, let us introduce the following notation.
For every set of nodes $S \subseteq V$, we define $E(S) = \{\{u,v\} \in E : u,v \in S\}$, and~$\delta(S)=\{ \{u,v\} \in E :  u \in S, v \notin S\}$. The formulation is as follows.
\begin{align} 
        \min \:\: &  \omega & \label{cut:obj}  \\
        \text{s.t.} \:\:&  \omega \ge \sum_{e \in E} w(e)x_{ei} & \forall i \in [k], \label{ineq:ordering}\\
        & \sum_{i \in [k]} x_{ei} \le 1 & \forall e\in E, \label{atmost1} \\  
        &  x_{ei} + \sum_{j\in[k]\setminus\{i\}} x_{fj} \le 1 & \forall v \in V, e,f\in \delta(v), i \in [k],\label{ineq:disjoint1}\\
        & \sum_{i \in [k]} \sum_{e \in E(S)} x_{ei}\leq |S|-1 & \forall S\subseteq V \text{ with } S\neq \emptyset, \label{ineq:subtour}\\
        & \sum_{i\in [k]} \sum_{e\in E}  x_{ei} = n-k, &  \label{ineq:edgenumber}\\
        & x_{ei} \in \{0,1\} & \forall e \in E, i \in [k]. \label{var:x}
\end{align}

Constraints~\eqref{ineq:subtour} and~\eqref{ineq:edgenumber} ensure that any solution corresponds to a spanning $k$-forest of $G$.
Constraints~\eqref{atmost1} and~\eqref{ineq:disjoint1}  ensure that every edge belongs to at most one tree and that the trees are pairwise disjoint, respectively.
The objective function together with~\eqref{ineq:ordering} implies that $\omega$ equals the weight of the heaviest tree in any optimal solution.
A similar formulation for \textsc{max-min-bsf} can be easily obtained by changing~\eqref{ineq:ordering} as follows:
\begin{align} 
        \max \:\: &  \omega &  \nonumber\\
        \text{s.t.} \:\:& \omega \le \sum_{e \in E} w(e)x_{ei} & \forall i \in [k],   \\
        & \eqref{atmost1}-\eqref{var:x}. \nonumber 
\end{align}

The separation problem associated with inequalities~\eqref{ineq:subtour} can be solved as follows (see \citet{magnanti1995optimal}, pg.~48). 
Let $\bar x\in \R^{|E|k}$ be a feasible solution of the linear relaxation of the cycle elimination formulation.
The separation problem of the cycle elimination inequalities is equivalent to finding a non-empty set of vertices $S \subseteq V$ that minimizes $|S|+\sum_{e \in E\setminus  E(S)} \sum_{i \in [k]} \bar x_{e,i}$.
Let $D=(V\cup\{s,t\}, A')$ be a directed graph obtained from the input graph $G=(V,E)$ such that $A':=\{(u,v), (v,u) : \{u,v\} \in E\}\cup\{(s,u) : u \in V\}\cup\{(u,t) : u \in V\}$.
Additionally, we define capacities $c \colon A' \to \R_\ge$ as $c(u,v) = \sum_{i \in [k]} \bar x_{\{u,v\},i}/2$ if $u,v \in V$, $c(u,v) = \sum_{e \in \delta(v)} \sum_{i \in [k]} \bar x_{e,i}/2$ if $u=s$, and $c(u,v) = 1$ if $v=t$.

Suppose that $S\cup \{s\}$ is an $s,t$-cut in $D$ with $S\subseteq V$.
The capacity of this cut is
\begin{align*}
  c(S\cup\{s\}) = & |S|+\frac{1}{2}\sum_{v \in V\setminus S} \sum_{e \in \delta(v)} \sum_{i \in [k]} \bar x_{e,i} + \frac{1}{2} \sum_{e \in \delta(S)} \sum_{i \in [k]} \bar x_{e,i}  \\
   = &|S|+\sum_{e \in  E(V\setminus S)} \sum_{i \in [k]} \bar x_{e,i} + \sum_{e \in \delta(S)} \sum_{i \in [k]} \bar x_{e,i} \\
   = &|S|+\sum_{e \in E \setminus  E(S)} \sum_{i \in [k]} \bar x_{e,i}.
\end{align*}
To avoid $S$ from being empty, we solve the minimum cut problem for each $v \in V$ by setting $c(s,v)$ to $\infty$. Let $S^* \subseteq V$ be a non-empty set of vertices such that $S^*\cup\{s\}$ is a minimum $s,t$-cut.
If $c(S^*\cup\{s\}) < n-k+1$, then 
$\sum_{e \in E(S^*)}  \sum_{i \in [k]} \bar x_{e,i} > |S^*|-1$.
Otherwise, all inequalities~\eqref{ineq:subtour} are satisfied.

\subsection{Partitioning formulations}\label{sec:form:partition}

Let $(G,w,k)$ be an instance of \textsc{min-max-bsf}.
Let $\mathcal{T}$ denote the set of all trees in $G$.
For each tree $T \in \mathcal{T}$, $w(T):=\sum_{e \in E(T)} w(e)$ denotes the weight of~$T$.
In the next formulation, we model a spanning $k$-forest of $G$ using binary variables~$x_T$ for all~$T \in \mathcal{T}$ that takes value 1 if $T$ is part of the forest.
\begin{align} 
        \min \: &  \omega &  \\
        \text{s.t.} \:& \sum_{T \in \mathcal{T}} x_T \le k,  & \label{ineq:k}\\
        & \sum_{T \in \mathcal{T} : v \in V(T)} x_T = 1 & \forall v \in V, \label{ineq:partition}\\
        &  \sum_{T \in \mathcal{T}:v\in V(T)} w(T)\:  x_{T}\leq \omega & \forall v \in V, \label{ineq:weight} \\
                & x_T \in \{0,1\} & \forall T \in \mathcal{T}.
\end{align}
Constraints~\eqref{ineq:k} ensure that at most $k$ trees are selected, and constraints~\eqref{ineq:partition} guarantee that each vertex $v \in V$ is covered by exactly one tree. For $v\in V$, the left hand side of constraint~\eqref{ineq:weight} is the weight of the tree containing $v$. 
Clearly, the objective function together with these constraints 
correctly computes the weight of the heaviest tree. 
The pricing problem associated with this model is investigated in the following section.

A partitioning formulation for \textsc{max-min-bsf} can be obtained by changing the objective to maximization and replacing constraints \eqref{ineq:weight} with 
\begin{align*}
    &  \sum_{T \in \mathcal{T}:v\in V(T)} w(T)\:  x_{T}\geq \omega & \forall v \in V.   
\end{align*}

\section{A branch-and-price approach}\label{sec:bp}
Henceforth, we consider only the min-max version of \textsc{bsf} as
the max-min variant can be addressed in a  similar manner.
Since the number of variables in the partition formulation is potentially exponential, we propose  a column-generation approach to dynamically generate these variables as needed during the optimization process.

\subsection{Pricing problem}\label{sec:pricingproblem}

For any subset $\mathcal{T}' \subseteq \mathcal{T}$, the \emph{restricted master problem} ($\mathrm{RMP}$) is formulated as follows.
\begin{align} 
    \min \:\: &  \omega & \nonumber \\
        \text{s.t.} \:\:& -\sum_{T \in \mathcal{T}'} x_T \ge -k,  & \label{BP_K}\\
        & \sum_{T \in \mathcal{T}'_v} x_T = 1 & \forall v \in V, \label{BP_cover}\\
        & \omega - \sum_{T \in \mathcal{T}'_v} w(T) x_{T} \geq 0  & \forall v \in V, \label{BP_Z} \\        
        &x_T \ge 0& \forall T \in \mathcal{T}',\\
        & \omega \geq 0, 
\end{align}
where $\mathcal{T}'_v = \{T \in \mathcal{T}' : v \in V(T) \}$ for $v\in V$.
The dual of the above linear program can be written as:
\begin{align} 
\max \: &  -k\theta + \sum_{v\in V} \eta_v &  \\
\text{s.t.} \:& -\theta + \sum_{u \in V(T)} \eta_u - w(T)\sum_{v\in V(T)}\zeta_{v}\leq 0  & \forall  T \in \mathcal{T}', \\ 
& \sum_{v \in V} \zeta_v \le 1, &\\
& \eta_v \in \mathbb{R} & \forall v \in V,\\
& \zeta_v \ge 0 & \forall v \in V,\\
& \theta \ge 0. & 
\end{align}
Here, \( \theta \), \( \eta_v \), and \( \zeta_v \) for all $v \in V$ are the dual variables corresponding to constraints \eqref{BP_K}, \eqref{BP_cover}, and \eqref{BP_Z}, respectively. These dual values are used to define reduced costs for candidate columns during pricing as follows.

Let~$(\hat{\theta},\hat{\eta}, \hat{\zeta})$ be an optimal solution to the dual problem above. 
We are interested in trees  $T \in \mathcal{T} \setminus \mathcal{T}'$ such that 
\[
\rho(T) := -\hat{\theta} + \sum_{u \in V(T)} \hat{\eta}_u - w(T)\sum_{v\in V(T)}\hat{\zeta}_{v} > 0.
\]
Thus, the pricing problem on input $(\hat{\theta},\hat{\eta}, \hat{\zeta})$ consists in finding a tree $T \in \mathcal{T} \setminus \mathcal{T}'$ that maximizes $\rho(T)$. 

Notice that our pricing problem closely resembles 
the \emph{Prize-Collecting Steiner Tree}  (\pcst) problem, in which, given an undirected graph $G$, node prizes $p \colon V(G) \to \Q_\ge$ and edge costs $c \colon E(G) \to \Q_\ge$, one seeks a tree~$T$ in~$G$ that maximizes $\sum_{v\in V(T)}p(v) - \sum_{e\in E(G)}c(e)$.

However, there are two key differences. First, the node prizes \( \hat{\eta}_v \) in our pricing problem can be negative, whereas \pcst\ typically assumes non-negative prizes (one of the few exceptions is the work of Johnson et al.~\cite{johnson2000prize}, which explicitly handles instances with negative prizes). Second, and more importantly, the weight of the tree is multiplied by a factor \( \alpha_T := \sum_{v \in V(T)} \hat{\zeta}_v \), which depends on the vertices included in the tree. 

The presence of negative prizes poses no major difficulty in our setting. However, the tree-dependent multiplicative coefficient in the objective function presents a significant modeling and algorithmic challenge. In the following two subsections, we address this issue by reducing the problem to a collection of \emph{Budgeted Prize-Collecting Steiner Tree} instances.

\begin{problem}[\textsc{Budgeted Prize-Collecting Steiner Tree} (\bpcst)]\hfill\\
\textbf{Instance:} A graph $G$, a budget $B \in \mathbb{Z}_{\geq 0}$, vertex prizes $p \colon V(G) \to \Q$, edge costs $c\colon E(G) \to \Q_{\geq}$, and edge weights $w\colon E(G) \to \mathbb{Q}_{\geq}$.\\
\textbf{Objective:} Find a tree $T$ with  $w(T):=\sum_{e \in E(T)} w(e) \leq B$ that maximizes  
\[
\sum_{v \in V(T)} p(v) - \sum_{e \in E(T)} c(e).
\]  
\end{problem}

\subsubsection{Fixing the vertices}
\label{sec:FixSelectVertices}
The first approach to deal with the nonlinearity in the objective function of our pricing problem is to fix the set of vertices of the tree as follows. 
Let \( B = \{v \in V : \hat{\zeta}_v > 0 \} \). For a given subset \( S \subseteq B \), we enforce the inclusion of all vertices in \( S \) in the tree, while ensuring that all vertices in $\bar S :=B \setminus S$ do not belong to this tree. 
This is done by setting the vertex prizes
\[
p(v) = \left\lbrace 
\begin{array}{lr}
    \hat\eta_v & \text{if } v \in V \setminus S, \\
    \hat\eta_v + M & \text{otherwise},
\end{array} 
\right.
\]
and the edge costs
\[
c(e) = \left\lbrace 
\begin{array}{lr}
    w_e \alpha_S & \text{if } e \cap \bar S = \emptyset, \\
    w_e \alpha_S + M & \text{otherwise},
\end{array} 
\right.
\]
where \( M \) is a sufficiently large number.

For each subset \( S \subseteq B \), the transformed instance with vertex prizes \( p \) and edge costs \( c \) corresponds to \pcst. 
This implies that solving the pricing problem requires solving \( |\mathcal{S}| \) instances of \pcst, where \( \mathcal{S} := \{S \subseteq B\} \). 
Interestingly, \( |B| \) (and as a result $|\mathcal{S}|$) tends to be quite small for many of the iterations of the column generation, making this approach suitable in those iterations.
Since we are only interested in trees with weight not exceeding a given upper bound, we shall focus on solving these budgeted instances.

Solving the pricing problem using this approach requires \( 2^{|B|} \) iterations, as each subset \( S \subseteq B \) defines an instance of the budgeted \pcst\ to be solved. In each iteration, an optimal tree \( T^*_S \) is obtained, and it is eligible to be added to the current pool \( \mathcal{T}' \) of columns in the column generation framework if and only if its reduced cost is negative, that is, \( \rho(T^*_S) > 0 \). Additionally, any feasible tree—such as incumbent solutions identified by the solver—with a negative reduced cost may also be added to \( \mathcal{T}' \).

If no such subset \( S \in \mathcal{S} \) yields a tree \( T^*_S \) with \( \rho(T^*_S) > 0 \), then no improving column remains, and the current  solution is optimal for the linear relaxation. 

\subsubsection{Fixing the weight}
\label{sec:FiXParam}

When \( |B| \) is large, as the number of subsets \( |\mathcal{S}| \) grows  exponentially, the previous approach 
becomes computationally impractical. We therefore propose an alternative strategy by replacing the tree-dependent   term $w(T)$ with a fixed parameter $W$.

We define the approximate reduced cost function as
\[
    \rho'(W,T) = -\hat{\theta} + \sum_{u \in V(T)} \hat{\eta}_u - W \sum_{v \in V(T)} \hat{\zeta}_v,
\]
where \(W \in [0,\mathrm{UB}] \) is a fixed parameter. Note that $\rho(T) = \rho'(w(T),T)$. Maximizing \( \rho'(W,T) \) corresponds to solving an instance of the \emph{Maximum Weight Connected Subgraph} (\mwcs) problem. Given an undirected graph $G$ and node prizes $p \colon V(G) \to \Q$, \mwcs\ consists in finding a connected subgraph $T$ of $G$ that maximizes $\sum_{v\in V(T)}p(v)$.
In our setting, however, we impose an additional budget constraint on the tree weight: we seek a tree \( T \) maximizing \( \rho'(W,T) \) such that \( w(T) \leq W \). 
This problem is a particular case of the  \bpcst, as shown next.

One may reduce a budgeted \mwcs\ instance to a \bpcst\ instance 
by setting the following node prizes and edge costs: 
\(p^{}_W(v) = \hat{\eta}_v - W \hat{\zeta}_v - \gamma(W)\) for all   \(v \in V\), and \(c^{}_W(e) = - \gamma(W)\)  for all \(e\in E\),
where $\gamma(W) = \min\left(\min_{u \in V} \left\{\hat{\eta}_u - W \hat{\zeta}_u\right\}, 0\right)$. 
For any tree $T$, it clearly holds that 
\[
    \rho'(W,T) = -\hat{\theta} + \sum_{u \in V(T)} p^{}_W(u) - \sum_{e \in E(T)} c^{}_W(e)  + \gamma(W).
\]

Given \( W\in [0, \mathrm{UB}] \), let \( T^*_W \) denote  an optimal solution to the corresponding \bpcst\ instance. 
If \( w(T^*_W) = W \), then 
$\rho'(W,T^*_W) = \rho'(w(T^*_W),T^*_W)  = \rho(T^*_W).$
Otherwise, it holds that 
\(
\rho'(W,T^*_W) \leq \rho'(w(T^*_W),T^*_W) = \rho(T^*_W).
\)
The tree \( T^*_W \) can be included in \( \mathcal{T}' \) if \( \rho(T^*_W) > 0 \), as this implies a negative reduced cost. As a particular example, when \( \rho'(W,T^*_W) > 0 \), it follows that \( \rho(T^*_W) > 0 \). Note, however, that even if \( \rho'(W,T^*_W) \leq 0 \), it is still possible that \( \rho(T^*_W) > 0 \).

To search for improving columns, we apply an iterative procedure over the parameter \( W\), starting from the current upper bound \( \mathrm{UB} \). At each iteration, we solve the transformed \bpcst\ instance with the current value of \( W \), obtaining an optimal tree \( T^*_W\). If \( \rho(T^*_W) > 0 \), then \( T^*_W \) is added to the pool of candidate trees, and the parameter is updated as \( W \gets w(T^*_W) - 1 \), thereby reducing the budget and guiding the search toward smaller subtrees that may yield better reduced costs. If \( \rho(T^*_W) \le 0 \), we instead set \( W \gets W - 1 \) and proceed to the next iteration.

This iterative process continues until \( W = 1 \). If no tree \( T^*_W \) with \( \rho'(W, T^*_W) > 0 \) is found for any \( W \in [1, \mathrm{UB}] \), we conclude that no improving column exists under this subprocedure, and the current solution to the master problem is optimal for the linear relaxation.

\subsubsection{Solving the budgeted PCST}
Regardless of whether we fix the set of vertices in the tree or the weight of the tree, we ultimately solve instances of the Budgeted Prize-Collecting Steiner Tree problem.

Although numerous reduction techniques exist in the literature for the classical \pcst~\citep{ljubic2006algorithmic,rehfeldt2022exact}, the additional budget constraint renders most of these techniques inapplicable to \bpcst.
Moreover, the few reduction techniques that we could adapt to the budgeted setting are not very effective in practice because the time required to compute the reduction frequently exceeds the time saved during the subsequent optimization.
Hence these reduction rules are not applied in the proposed solving method.

We next present a natural integer linear programming formulation for \bpcst\ using arborescences, that is, rooted trees with all edges oriented away from the root.
Let $(G,B,p,c,w)$ be an instance of \bpcst.
First we transform the undirected graph \( G = (V, E) \) into a directed graph.
This transformation begins by adding a new vertex \( r \), which serves as the root of the arborescence. 
Each undirected edge \( \{u,v\} \in E \) is replaced with two directed arcs, \( (u,v) \) and \( (v,u) \), with the arc profit and weight defined as:
\(
c'(uv) = p(v) - c(\{u,v\}), \: c'(vu) =p(u) - c(\{u,v\}), \text{ and } w'(uv) = w(vu) = w(\{u,v\}).
\)
Furthermore, we introduce additional arcs \( (r,v) \) for each \( v \in V \), with the corresponding profits and weight values: $c'(r,v) = p(v) - \theta$ and  $w'(rv) = 0$ for all $v \in V$.
As a result, we obtain a directed graph \( G' = (V\cup\{r\}, A) \) with arc profits and weights~$c'$ and $w'$, respectively. 
For each vertex $v \in V\cup\{r\}$, \( \delta^+(v) \) and \( \delta^-(v) \) denote the sets of outgoing and incoming arcs of \( v \)  in $G'$. 

\begin{align}
    \max \:\:& \sum_{a \in A} c'(a) \, x_a \\ 
    \text{s.t.} \:\: & \sum_{a \in A} w'(a) \, x_a \le B, &\\ 
    & \sum_{a \in \delta^+(r)} x_a = 1, \\ 
    & \sum_{a \in \delta^-(v)} x_{a} = y_v & \forall v \in V, \\ 
    & \sum_{a \in \delta^-(S)} x_a \ge y_v & \forall S \subseteq V, \, v \in S, \label{ABOR:Sets} \\ 
    & x_a\in \{0,1\} & \forall a \in A, \\
    & y_v \in \{0,1\}& \forall v \in V.
\end{align}

Note that there are exponentially many inequalities~\eqref{ABOR:Sets}.
These inequalities are separated in polynomial time by computing minimum cuts between the root and each vertex $v \in V$ with $y_v > 0$, similarly to the lazy constraint generation approach by \citet{ljubic2006algorithmic}. However, unlike \citeauthor{ljubic2006algorithmic}, we do not perform nested constraint generation, as the sizes of our \bpcst\ instances are relatively small.

While solving the \bpcst, we retain all incumbent solutions in the hope of identifying multiple trees \( T \) with \( \rho(T) > 0 \). To guide the search, we also impose 
\[
\sum_{a \in A} c'(a) \, x_a \ge 0.
\]

\subsection{Generation of the initial trees}
\label{subsec:GenerateTrees}

We construct the initial pool of columns for the branch-and-price algorithm in two stages. The first stage reuses the heuristic procedure described in Algorithm~\ref{alg:heur} in Section~\ref{sec:approximation}. 
A key difference with respect to the version in Section~\ref{sec:approximation} is that we keep track of the trees that appear in the sequence of forests produced during the search. 
Whenever a forest $F$ is generated, each  tree $T$ in $F$ with $w(T) \leq \mathrm{UB}$ is stored (if not already encountered) and offered to the master problem as a new column. This bookkeeping adds negligible overhead but significantly reduces the number of pricing iterations required later on.
The heuristic runs for a  duration that depends on the instance size. Precisely, we set the time limit of
\(
3^{\frac{n - 20}{10}}
\) seconds, where $n$ is the number of vertices. This exponential scaling allows larger instances to benefit from a longer construction phase without incurring excessive computational cost on smaller ones.

In the second stage, we enhance the initial pool through a randomized local modification strategy. At each iteration, we randomly select a previously generated tree $T$ and apply one of the following perturbations:
\begin{itemize}
    \item \textbf{Vertex removal:} Randomly remove a vertex $v \in V(T)$ and recompute a minimum spanning tree over the resulting subgraph, if it remains connected.
    \item \textbf{Vertex addition:} Randomly select a vertex $v \notin V(T)$, add it to the tree, and recompute a minimum spanning tree on the induced subgraph, if connected.
\end{itemize}
If the resulting tree $T'$ satisfies $w(T') \leq \mathrm{UB}$ and has not been encountered before, it is stored and added as a new column. This randomized generation continues until a certain number of columns, defined by a threshold function $B(n, k)$, is reached. 
This function, which is determined empirically,  is defined as:
\[
B(n, k) = \left\lceil 2^{0.1n + 12 - \frac{k}{2}} \right\rceil.
\]
To prevent the procedure from running indefinitely, we also impose a stagnation limit: the process is terminated early if no new tree is added after 1000 consecutive iterations.

This two-phase strategy effectively balances structure-driven construction with exploration-based diversification, providing a rich and diverse pool of trees to support efficient column generation.

\subsection{Column generation}
\label{sec:CG}

The column generation procedure solves the linear relaxation of the partitioning formulation  
at each node of the branch-and-price tree. Initially, a subset of feasible columns (trees), generated according to the procedure in Section~\ref{subsec:GenerateTrees}, is used to construct the RMP. Additional columns are then generated dynamically as new trees with negative reduced cost are identified by solving the pricing problems.
{We note that the generated columns are added to RMP only if they are  consistent with the branching rules associated with that node of the enumeration tree.}
The details of the column generation procedure are provided in Algorithm~\ref{alg:cg}. 

At each iteration, the current RMP is solved, and the corresponding dual values are extracted. These dual values guide the construction of reduced-cost functions for the pricing problem, which identifies new columns with negative reduced cost. Depending on the cardinality of the set \( B = \{v \in V : \zeta_v > 0\} \), the algorithm proceeds as follows:
\begin{enumerate}[(i)]
    \item  it selects sets of vertices to be in the tree as in Section~\ref{sec:FixSelectVertices} (procedure~\texttt{PCST\_FV}) if \( 2^{|B|} < 2 \times \mathrm{UB} \),
    \item it chooses weights for the trees as in Section~\ref{sec:FiXParam} (procedure~\texttt{PCST\_FW})  otherwise.
\end{enumerate}
The fixed-vertices variant is generally preferred because we use any of the fixed vertices as the root in the formulation for \bpcst (by imposing that the solution uses the arc from $r$ to this node), which  reduces symmetry and improves the performance of the branch-and-cut algorithm. 

If new improving columns are found, they are added to the master problem, and the process repeats until no improving columns remain or the time limit is reached. 

The upper bound is updated within the CG loop by solving the integer version of the RMP over all columns generated up to that point if the current node is the root of the enumeration tree, the objective value satisfies \( \omega^* \le \mathrm{UB} - 1 \), and either the number of newly added columns since the last update exceeds a predefined threshold, or a sufficient number of CG iterations, or a specified amount of time has elapsed. 
In addition, the upper bound is always updated at the end of each column generation procedure to capture any potential improvements.


If the lower bound at the current node~$\mu$ of the enumeration tree satisfies \( \mathrm{LB}_\mu \le \mathrm{UB} - 1 \), then a branching pair of vertices is identified. 
Among all candidate pairs of distinct vertices not already subject to branching constraints, the algorithm chooses $u,v \in V$ whose total assignment
\(
\sum_{T \in \mathcal{T}'_u \cap \mathcal{T}'_v} x^*_T
\)
is closest to 0.5, where~$x^*$ is an optimal solution to the current RMP.

\begin{algorithm}[t!]
\caption{Column Generation at node \( \mu \) for \( \mathcal{T}' \)}
\label{alg:cg}
\begin{algorithmic}[1]
\State Construct RMP using all columns in \( \mathcal{T}' \) except for those violating the branching rules associated with \(\mu \)
\While{time limit not reached}    
    \State Solve RMP to get \( (\omega^*, x^*, \theta^*, \eta^*, \zeta^*) \)
    \If{RMP is infeasible}
        \State Add feasibility column
    \EndIf
    \State Let \( B \gets \{v \in V : \zeta_v > 0\} \)
    \If{\( 2^{|B|} < 2 \times \mathrm{UB} \)}
         \( \mathcal{T}_\mu \gets \texttt{PCST\_FV}\) on node $\mu$
    \Else \:
         \( \mathcal{T}_\mu \gets \texttt{PCST\_FW} \) on node $\mu$
    \EndIf
    \If{trees with negative reduced cost are found in \( \mathcal{T}_\mu \)}
        \State \( \mathcal{T}' \gets \mathcal{T}' \cup \mathcal{T}_\mu \)
        \State Add new trees to the RMP
        \If{\( \mu \) is the root node and UB update criteria is met and \( \omega^* \le \mathrm{UB} - 1 \)}
             \( \mathrm{UB} \gets \omega^* \)
        \EndIf
    \Else \: \textbf{break}
    \EndIf
\EndWhile
\State \( \mathrm{LB}_\mu \gets \omega^* \)
\State \(  b_\mu  \gets \emptyset \)
\If{\( \mathrm{LB}_\mu \le \mathrm{UB} - 1 \)}
    \State Let \( u, v \in V \) with $u\neq v$ such that  $|\sum_{T \in \mathcal{T}'_u \cap \mathcal{T}'_v} x^*_T -0.5|$ is minimized \Comment{Most fractional pair}
    \If{$\sum_{T \in \mathcal{T}'_u \cap \mathcal{T}'_v} x^*_T$ is not integer}
         Set branching pair \( b_\mu \gets (u, v) \)
    \Else \:  \( \mathrm{UB} \gets \mathrm{LB}_\mu \)
    \EndIf
\EndIf
\State \Return \( (\mathrm{LB}_\mu, \mathrm{UB}, b_\mu )\) 
\end{algorithmic}
\end{algorithm}
 
\subsection{Branching strategy}
The branching strategy follows the \emph{Ryan-Foster} rule (cf. \citet{RyanFoster1981}), which is particularly well-suited for set partitioning and covering problems \citep{mehrotra1996column, wahlen2025branch}. Instead of branching directly on individual variables, this method branches on pairs of elements that should either be assigned together or separated in the final solution.

For each node $\mu$ of the branch-and-price tree, let $R_\mu$ denote the set of branching rules associated with~$\mu$.
The branch-and-price algorithm begins at the root node (i.e., \( R_\mu = \emptyset \) ) by solving the linear relaxation 
using column generation as described in Algorithm~\ref{alg:cg}. 
If the obtained solution is integral, then \( \mathrm{LB}_\mu = \mathrm{UB} \), and an optimal solution to the original problem is found. Otherwise,  a branching scheme is applied to partition the solution space into two subproblems. The details of this branching scheme are presented in Algorithm~\ref{alg:bp_rf}.

If the solution to the current node $\mu$ is fractional and the lower bound satisfies \( \mathrm{LB}_\mu \le \mathrm{UB} - 1 \), the node is added to a queue \( \mathcal{Q} \) for further branching. While the queue is not empty, the algorithm removes the node with the lowest lower bound from \( \mathcal{Q} \). 
Then it creates two child nodes based on the branching pair \( (u, v) \), which is precomputed for this node at the end of the column generation procedure (see Algorithm~\ref{alg:cg}). 
One child enforces \( u \) and \( v \) to appear in the same tree (\textsc{together}), while the other enforces them to appear in different trees (\textsc{apart}). 

In the pricing problem, for every pair $(u,v)$ branched \textsc{together}, we add the linear constraint 
\( y_u - y_v = 0 \), ensuring that $u$ and $v$ appear simultaneously in any generated column. 
Conversely, for every pair $(u,v)$ branched \textsc{apart}, we add the inequality 
\( y_u + y_v \leq 1 \), which forbids both nodes from appearing in the same column. 

Each child node $\mu$ is solved using column generation, and if its lower bound satisfies \( \mathrm{LB}_\mu \le \mathrm{UB} - 1 \), it is added to the queue. This process continues until the queue is empty or the time limit is reached, at which point the best integer solution found is returned.


\begin{algorithm}[t]
\caption{Branch-and-Price with Ryan--Foster Branching}
\label{alg:bp_rf}
\begin{algorithmic}[1]
\State Initialize \( \mathrm{UB} \) as the upper bound obtained from tree generation
\State Set root node \( \mu = 0 \) with branching rules \( R_\mu \gets \emptyset \)
\State \( (\mathrm{LB}_{\mu}, \mathrm{UB}, b_\mu) \gets \) Algorithm~\ref{alg:cg} on node $\mu$ \Comment{Solve node via column generation}
\If{\( \mathrm{LB}_\mu \le \mathrm{UB} - 1 \)}
    Insert node \( \mu  \) into queue \( \mathcal{Q} \)
\EndIf
\While{\( \mathcal{Q} \neq \emptyset \) and time limit not exceeded}
    \State Pop node \( \mu \) from \( \mathcal{Q} \) with smallest \( \mathrm{LB}_\mu \)
    \If{\( \mathrm{LB}_\mu \le \mathrm{UB} - 1 \)}
        \If{solution at node \( \mu \) is integral}
             \( \mathrm{UB} \gets \min\{\mathrm{UB}, \mathrm{LB}_\mu\} \)
        \Else
            \State Let \( (u, v) \) be the branching pair $b_\mu$ of \( \mu \)
            \ForAll{branching rules \( r \in \{\textsc{together}, \textsc{apart}\} \)}
                \State Create child node \( \mu' \)
                \State \( R_{\mu'} \gets R_\mu \cup \{(u, v, r)\} \)
                \State \( (\mathrm{LB}_{\mu'}, \mathrm{UB},b_{\mu'}) \gets \) Algorithm~\ref{alg:cg} on node $\mu'$   
                \If{\( \mathrm{LB}_{\mu'} \le \mathrm{UB} - 1 \)}
                     Insert node \( \mu' \) into queue \( \mathcal{Q} \)
                \EndIf
            \EndFor
        \EndIf
    \EndIf
\EndWhile
\If{\( \mathcal{Q} = \emptyset \)} \: \(\text{gap} \gets 0 \) \Comment{Optimal solution is found}
\Else \: \(\text{gap} \gets \frac{\mathrm{UB} - \lceil \min_{\mu \in \mathcal{Q}} \mathrm{LB}_\mu \rceil}{\mathrm{UB}} \)
\EndIf
\end{algorithmic}
\end{algorithm}

\section{Computational experiments for \textsc{min-max-bsf}} \label{sec:experiments}
All computational experiments were conducted on a machine running Microsoft Windows 11 Education, equipped with an Intel\textsuperscript{\textregistered} Core\textsuperscript{TM} i9-9900 CPU @ 3.10\,GHz, featuring 8 physical cores and 16 logical processors, and 64\,GB of installed RAM. The C++ implementation of our solution methods was compiled using the Microsoft Visual C++ compiler, and uses the Boost Library \texttt{1.80.0}, LEMON Library \texttt{1.3.1}, and IBM ILOG CPLEX Optimization Studio \texttt{22.1.1}.

All CPLEX-based methods were configured to run with a single thread and a relative optimality gap tolerance of~\(10^{-5}\). All other CPLEX parameters were set to default. To ensure numerical stability, we applied a tolerance of~\(10^{-3}\) when interpreting solution values of continuous variables.


To evaluate the computational performance of the proposed formulations and algorithms, we generated synthetic instances of \textsc{min-max-bsf} using the Boost Graph Library.
Each instance consists of a connected, undirected, edge-weighted graph \( G = (V, E) \) with a specified number of vertices \( n = |V| \), number of edges \( m = |E| \), and a number of trees \( k \in \mathbb{Z}_{>0} \). The edge weights are positive integers.

The dataset contains 380 instances generate as follows.
For each \( n \in \{20, 30, 40, 50\} \), \( p \in \{0.1, 0.2, 0.3, 0.4, 0.5\} \), and $k \in \{2,4,6,8,10\}$,
we create four instances, each of them consisting of a random graph with $n$ vertices and $m=\left\lfloor p \cdot \frac{n(n-1)}{2} \right\rfloor$ edges which is generated using \texttt{boost::generate\_random\_graph()}.
To ensure feasibility of these instances, any disconnected graph is discarded.
Thus all graphs in the dataset are connected.
We note that the combination $n=20$ and $p=0.1$ always led to a disconnected graph, this explains the total number of instances.
Finally, the weight of each edge in these instances is sampled uniformly at random in the set $\{1,\ldots,100\}$.

In what follows, we denote by $\mathcal{F}$ the branch-and-bound algorithm based on the compact flow formulation presented in Section~\ref{sec:form:flow}, $\mathcal{E}$ the branch-and-cut algorithm using the cycle elimination model in Section~\ref{sec:form:elimination}, and $\mathcal{P}$ the branch-and-price algorithm based on the model in Section~\ref{sec:form:partition} and described in Section~\ref{sec:bp}.
We use the solution generated by Algorithm~\ref{alg:heur} with a time limit of {one~second} as a warm-start solution to $\mathcal{F}$  and~$\mathcal{E}$.
We also set a time limit of 3 hours for all the solving methods.

The test instances, computational results, and source codes of the complete implementation are public available\footnote{
\texttt{https://github.com/md6712/Edge-Weighted-Balanced-Connected-Partitions}}.

\subsection{Comparison of the linear relaxations} 

We first evaluate the quality of the dual bounds given by the linear relaxation of the three proposed formulations in Section~\ref{sec:formulations} with respect to the best known primal bounds obtained by any of the solving methods $\mathcal{F}$, $\mathcal{E}$,  and $\mathcal{P}$.
For each of the three formulations, we compute the average relative optimality gaps of these bounds across all instances in each class. 

Table~\ref{tab:bound-performance} summarizes these results, where columns $F_{\rm{gap}}$, $E_{\rm{gap}}$, and $P_{\rm{gap}}$ indicate the average gap percentages obtained with the linear relaxation of the flow model (Section~\ref{sec:form:flow}), cycle elimination model (Section~\ref{sec:form:elimination}), and partition model (Section~\ref{sec:form:partition}), respectively. 
It is clear from Table~\ref{tab:bound-performance} that the cycle elimination formulation has the worst average dual bound in all classes of instances, and that the partition formulation produces the best average dual bounds.
A caveat is that the column-generation algorithm for the partition formulation does not finish within the time limit for 12 out of 380 instances, and so these instances are not considered in the averages of this model.

\begin{table}[t!]
\centering
\caption{Average relative optimality 
gaps (\%) for different bounds across instance classes. The column generation does not finish within the time limit for the number of instances indicated in parentheses, and so these instances are excluded from the average.}
\label{tab:bound-performance}
\begin{tabular*}{\textwidth}{@{\extracolsep{\fill}} c c   r r r  }
\toprule
$n$ & \#\text{ instances} &\( F_{\rm{gap}} \) & \( E_{\rm{gap}} \) & \( P_{\rm{gap}} \)  \\
\midrule
20 & 80 & 38.92 & 47.30 & 1.74               \\
30 & 100 & 26.06 & 35.62 & 2.67               \\
40 & 100 & 20.65 & 32.12 &(3)\tnote{*} 3.52    \\
50 & 100 & 15.94 & 27.89     &(9)\tnote{*} 2.46      \\
\bottomrule
\end{tabular*}
\end{table}

\subsection{Comparison of the solving methods}

The weak dual bounds yield by the cycle elimination formulation (see Table~\ref{tab:bound-performance}) seem to have a relevant impact on the performance of the solving method~$\mathcal{E}$.
Even on the smallest instances (i.e. $n=20$) in our dataset, $\mathcal{E}$  was not able to solve 3 (out 80) of these instances within 1800 seconds, while both  $\mathcal{F}$ and $\mathcal{P}$ solved each of them within 10 seconds on average.
Considering the instances with $n=30$, $\mathcal{E}$ does not solve 74 (out 100) of these instances, while both  $\mathcal{F}$ and $\mathcal{P}$ solved all of them.
Due to this very poor performance of $\mathcal{E}$ compared to $\mathcal{F}$ and $\mathcal{P}$, we do not report its computational results on the entire dataset.

In the remainder of this section, we focus on the comparison of $\mathcal{F}$ and $\mathcal{P}$ considering all 380 instances in the dataset.
We first use performance profiles introduced by Dolan and Mor{\'e}~\cite{dolan2002benchmarking} to compare the running times of these two algorithms. 
Let $A$ be a set of algorithms and $P$ a non-empty set of instances.
For each instance $p \in P$ and algorithm $a \in A$, $t_{p,a}$ denotes the running time required by $a$ to solve $p$.
The performance ratio of an algorithm~$a \in A$ on an instance~$p \in P$ is defined as~$r_{p,a} = t_{p,a}/\min\{t_{p,b} :  b \in A\}$.
In case $p$ is not solved by $a$, we set $r_{p,a}=+\infty$.
For every $a \in A$ and $\tau \geq 1$, define
\(\rho_{a}(\tau) = |\{p \in P : r_{p,a}\leq \tau\}|/|P|. \)
The performance profile of an algorithm $a\in A$ corresponds to cumulative distribution function~$\rho_a$.  

The performance profiles of $\mathcal{F}$ and $\mathcal{P}$ on the entire dataset is depicted in Figure~\ref{fig:perf-profile}.
Note that $\mathcal{F}$ and $\mathcal{P}$ are the fastest solving methods for approximately $55\%$ and $51\%$ of the instances, respectively. 
Moreover, $\mathcal{F}$ solves $73\%$ of the instances, and $\mathcal{P}$ solves $85\%$ of the instances.


\perfset{data-runtime.tex}




    \begin{figure}[t!]
        \centering
        \begin{subfigure}[t]{0.45\textwidth}
        \centering
        \begin{tikzpicture}
            \begin{axis}[
            height=6cm,
            grid=both, 
            no marks, xlabel={Normalized time $\tau$}, ylabel={Proportion of  instances $\rho$}, xmin=1, xmax=20,
            extra x ticks={1},
            label style={font=\small},
            tick label style={font=\small},    
            title style={font=\small},
            legend style={font=\footnotesize, at={(0.95,0.3)},anchor=east}]
                \addprofiles{2}{100}
                \legend{$\mathcal{P}$, $\mathcal{F}$}
            \end{axis}
        \end{tikzpicture}
        \end{subfigure}
        \hfill
        \begin{subfigure}[t]{0.45\textwidth}
        \centering
        \begin{tikzpicture}            
            \begin{axis}[
            height=6cm,
            grid=both,no marks,
            xlabel={Normalized time $\tau$}, 
            xmin=1, 
            xmax=100,
            extra x ticks={1},
            label style={font=\small},
            tick label style={font=\small},    
            title style={font=\small},
            legend style={font=\footnotesize, at={(0.95,0.3)},anchor=east}]
                \addprofiles{2}{100}
                \legend{$\mathcal{P}$, $\mathcal{F}$}
            \end{axis}
        \end{tikzpicture}
        \end{subfigure}%
        
        \caption{Performance profiles of the algorithms $\mathcal{F}$ and $\mathcal{P}$ on the entire dataset.}
        \label{fig:perf-profile}
    \end{figure}

We observe that $\mathcal{F}$ performs very well on instances with $k=2$ and on the smallest instances (i.e. $n=20$).
To better understand how these parameters affect the performances of the proposed methods, 
we consider two subsets of the dataset: (i) instances with $k\ge 4$, and (ii) instances with $n\ge 30$

Figure~\ref{fig:perf-profile-random-large-k} shows the performance profiles on the instances with $k\geq 4$.
Note that $\mathcal{F}$ solves less than $68\%$ of the instances, while $\mathcal{P}$ solves over $90\%$ of them.
Additionally, $\mathcal{F}$ and $\mathcal{P}$ are the fastest algorithms on $45\%$ and $59\%$ of these problems, respectively.
The performance on the instance with $n\geq 30$ are illustrated in Figure~\ref{fig:perf-profile-random-large-n}.
Again, $\mathcal{P}$ clearly outperforms $\mathcal{F}$ in this case, solving over $80\%$ of these instances.
We next discuss these results in more details.


\perfset{data-runtime-large-k.tex}




    \begin{figure}[t!]
        \centering
        \begin{subfigure}[t]{0.45\textwidth}
        \centering
        \begin{tikzpicture}
            \begin{axis}[
            height=6cm,
            grid=both, 
            no marks, xlabel={Normalized time $\tau$}, ylabel={Proportion of  instances $\rho$}, xmin=1, xmax=20,
            extra x ticks={1},
            label style={font=\small},
            tick label style={font=\small},    
            title style={font=\small},
            legend style={font=\footnotesize, at={(0.95,0.3)},anchor=east}]
                \addprofiles{2}{100}
                \legend{$\mathcal{P}$, $\mathcal{F}$}
            \end{axis}
        \end{tikzpicture}
        \end{subfigure}
        \hfill
        \begin{subfigure}[t]{0.45\textwidth}
        \centering
        \begin{tikzpicture}            
            \begin{axis}[
            height=6cm,
            grid=both,no marks,
            xlabel={Normalized time $\tau$}, 
            xmin=1, 
            xmax=100,
            extra x ticks={1},
            label style={font=\small},
            tick label style={font=\small},    
            title style={font=\small},
            legend style={font=\footnotesize, at={(0.95,0.3)},anchor=east}]
                \addprofiles{2}{100}
                \legend{$\mathcal{P}$, $\mathcal{F}$}
            \end{axis}
        \end{tikzpicture}
        \end{subfigure}%
        
        \caption{Performance profiles of the algorithms $\mathcal{F}$ and $\mathcal{P}$ on the instances with $k\geq 4$.}
        \label{fig:perf-profile-random-large-k}
    \end{figure}

\perfset{data-runtime-large-n.tex}




    \begin{figure}[t!]
        \centering
        \begin{subfigure}[t]{0.45\textwidth}
        \centering
        \begin{tikzpicture}
            \begin{axis}[
            height=6cm,
            grid=both, 
            no marks, xlabel={Normalized time $\tau$}, ylabel={Proportion of  instances $\rho$}, xmin=1, xmax=20,
            extra x ticks={1},
            label style={font=\small},
            tick label style={font=\small},    
            title style={font=\small},
            legend style={font=\footnotesize, at={(0.95,0.3)},anchor=east}]
                \addprofiles{2}{100}
                \legend{$\mathcal{P}$, $\mathcal{F}$}
            \end{axis}
        \end{tikzpicture}
        \end{subfigure}
        \hfill
        \begin{subfigure}[t]{0.45\textwidth}
        \centering
        \begin{tikzpicture}            
            \begin{axis}[
            height=6cm,
            grid=both,no marks,
            xlabel={Normalized time $\tau$}, 
            xmin=1, 
            xmax=100,
            extra x ticks={1},
            label style={font=\small},
            tick label style={font=\small},    
            title style={font=\small},
            legend style={font=\footnotesize, at={(0.95,0.3)},anchor=east}]
                \addprofiles{2}{100}
                \legend{$\mathcal{P}$, $\mathcal{F}$}
            \end{axis}
        \end{tikzpicture}
        \end{subfigure}%
        
        \caption{Performance profiles of the algorithms $\mathcal{F}$ and $\mathcal{P}$ on the instances with $n\geq 30$.}
        \label{fig:perf-profile-random-large-n}
    \end{figure}

The algorithm based on the compact flow formulation performs very well for small values of \( k \), particularly when \( k =2 \), with extremely low running times and solving all such instances. 
For example, with \( n = 30 \) and \( k = 2 \), the $\mathcal{F}$  solves all 20 instances in under 10 seconds on average. 
In such cases, $\mathcal{P}$  is significantly slower, with its performance hindered by the overhead from LP column generation and pricing, even though it still solves all these instances.

As the value of \( k \) increases, the computational advantage of the compact formulation diminishes rapidly. 
Already when \( k = 4 \), $\mathcal{F}$  begins to show signs of computational stress, with some instances reaching high running times or nearing the time limit. 
The situation becomes more pronounced with \( k \ge 6 \), where the capacity of $\mathcal{F}$  to solve instances degrades substantially. 
For \( n = 50 \) and \( k \geq 6 \), $\mathcal{F}$ solves very few instances to optimality (in some cases none), and the average runtimes is close to the time limit.
In contrast, $\mathcal{P}$ exhibits more stable behavior as \( k \) increases. Although it may not outperform $\mathcal{F}$ in raw speed for small \( k \), it proves significantly more robust in closing the optimality gap for larger and more complex instances. 
Notably, for \( n = 50 \), $\mathcal{P}$ consistently solves a larger number of instances than $\mathcal{F}$ for all values of \( k > 2 \). 
For example, with \( k = 10 \), $\mathcal{F}$ solves 15 out of 20 instances, while $\mathcal{F}$ does not solve any.

One of the key reasons for the superior performance of the branch-and-price approach with large \( k \) is structural: as \( k \) increases, the optimal solutions tend to consist of smaller and lighter trees. 
This structure limits the number of heavy trees that could potentially be generated as columns in the pricing problem. 
The upper bound acts as a filter that effectively prevents the inclusion of many large or suboptimal trees, thereby reducing the combinatorial explosion in the master problem and improving the overall convergence. 
As a result, $\mathcal{P}$ scales more gracefully with increasing \( k \), both in terms of computational effort and the quality of the lower bounds it provides.

Moreover, $\mathcal{P}$’s ability to provide stronger dual bounds and 
smaller final integrality gaps is evident in Table~\ref{tab:bound-performance} and 
Figure~\ref{fig:boxplot}. The \emph{final integrality gap} is computed as the relative difference 
between the best primal and dual bounds at termination.
While the gaps produced by $\mathcal{F}$ increase sharply with \( k \) and 
\( n \), $\mathcal{P}$ yields small gaps, under~$3\%$ for most instances 
with positive gap. 
This is particularly important when evaluating solution quality of unsolved 
instances, as $\mathcal{P}$ provides tight lower bounds, which are useful 
in practical decision-making.

\begin{filecontents}{gap-bp.csv}
7, 4, 2, 2, 2, 4, 1, 1, 4, 5, 4, 1, 2, 3, 6, 3, 4, 6, 2, 3, 3, 3, 4, 4, 8
\end{filecontents}
\begin{filecontents}{gap-c.csv}
1.0, 5.0, 6.0, 5.0, 6.0, 11.0, 6.0, 8.0, 3.0, 11.0, 3.0, 12.0, 11.0, 13.0, 6.0, 10.0, 3.0, 14.0, 13.0, 12.0, 16.0, 2.0, 9.0, 3.0, 5.0, 6.0, 3.0, 4.0, 5.0, 5.0, 6.0, 10.0, 7.0, 9.0, 11.0, 11.0, 13.0, 19.0, 11.0, 19.0, 15.0, 4.0, 1.0, 7.0, 10.0, 9.0, 9.0, 10.0, 14.0, 10.0, 10.0, 17.0, 14.0, 15.0, 21.0, 3.0, 3.0, 10.0, 8.0, 7.0, 7.0, 14.0, 9.0, 12.0, 4.0, 13.0, 15.0, 16.0, 15.0, 2.0, 1.0, 7.0, 7.0, 7.0, 6.0, 11.0, 11.0, 8.0, 14.0, 16.0, 17.0, 15.0, 20.0, 1.0, 2.0, 2.0, 10.0, 10.0, 8.0, 14.0, 9.0, 9.0, 13.0, 13.0, 21.0, 15.0, 18.0 
\end{filecontents}
\begin{filecontents}{gap-heur-r.csv}
6.22, 0.98, 10.90, 14.18, 8.70, 0.88, 25.24, 15.46, 19.32, 2.44, 27.31, 6.94, 8.22, 16.67, 2.17, 18.75, 27.59, 7.35, 10.34, 14.08, 11.76, 36.00, 13.33, 3.70, 2.03, 10.16, 20.83, 28.17, 17.24, 22.62, 37.25, 10.64, 2.82, 12.50, 23.33, 11.43, 7.14, 24.27, 3.92, 7.14, 9.68, 11.90, 9.09, 11.11, 16.67, 1.19, 13.58, 7.79, 16.38, 14.79, 10.23, 41.88, 29.51, 15.79, 24.36, 24.59, 16.80, 30.53, 0.85, 8.24, 11.01, 5.88, 3.66, 0.35, 42.47, 16.67, 1.59, 12.26, 1.47, 23.48, 16.95, 23.89, 20.83, 15.79, 12.50, 23.08, 28.57, 2.36, 4.37, 10.92, 14.04, 34.96, 1.94, 25.00, 18.75, 18.00, 11.59, 32.56, 16.67, 2.33, 48.84, 37.50, 23.91, 30.00, 0.75, 21.97, 18.79, 27.82, 13.98, 17.74, 17.33, 21.88, 10.20, 16.00, 13.64, 8.51, 17.07, 15.15, 40.00, 6.67, 6.06, 5.26, 15.00, 17.22, 26.63, 13.39, 9.80, 26.00, 6.45, 24.07, 24.62, 30.77, 10.00, 36.11, 41.03, 2.86, 11.11, 21.43, 15.38, 9.52, 30.00, 9.09, 25.93, 6.52, 40.49, 3.43, 35.05, 5.26, 13.36, 0.68, 8.70, 5.77, 47.28, 26.19, 7.60, 23.02, 18.25, 7.83, 33.33, 16.67, 20.88, 16.67, 22.88, 21.17, 14.63, 29.50, 9.15, 10.63, 20.81, 15.53, 24.78, 27.38, 11.58, 29.25, 23.81, 30.65, 25.33, 21.13, 25.76, 7.14, 10.17, 27.87, 0.51, 25.73, 10.91, 35.71, 9.09, 12.80, 24.19, 31.88, 25.76, 25.35, 38.00, 22.81, 36.59, 35.29, 15.79, 24.39, 13.64, 12.77, 13.21, 4.85, 6.54, 26.71, 3.23, 21.43, 12.68, 37.04, 18.75, 20.51, 11.54, 27.03, 12.50, 18.18, 25.00, 23.81, 25.93, 32.00, 32.35, 12.50, 18.12, 3.51, 20.00, 5.37, 25.64, 15.09, 5.17, 17.86, 12.50, 23.40, 18.60, 40.91, 20.00, 22.86, 13.33, 28.57, 28.00, 35.00, 40.74, 25.45, 17.65, 2.91, 6.38, 26.94, 25.81, 9.51, 31.12, 9.03, 9.71, 19.15, 24.86, 20.14, 32.41, 25.35, 26.00, 24.11, 4.00, 45.83, 4.90, 21.97, 16.72, 7.03, 32.12, 14.29, 13.79, 32.67, 34.21, 44.12, 12.28, 32.91, 13.16, 21.69, 29.11, 30.16, 23.21, 31.25, 5.45, 1.18, 28.10, 11.88, 7.25, 26.98, 19.30, 14.49, 22.37, 30.00, 21.84, 29.87, 13.21, 19.35, 18.97, 30.77, 44.90, 33.33, 13.16, 13.51, 6.25, 2.80, 6.10, 12.50, 24.69, 29.63, 24.68, 22.92, 30.65, 17.46, 10.42, 36.36, 13.89, 30.61, 27.78, 30.30, 18.18, 21.21, 50.00, 9.82, 4.51, 2.35, 17.39, 31.15, 34.78, 10.45, 30.00, 13.64, 17.02, 36.96, 45.95, 20.00, 41.18, 28.13, 26.09, 20.00, 13.04, 19.23
\end{filecontents}
\begin{filecontents}{gap-heur-tg.csv}
0.75, 9.09, 0.68, 1.09, 2.50, 4.60, 5.88, 4.17, 0.51, 0.88, 0.79, 1.23, 2.08, 2.27, 6.38
\end{filecontents}
\begin{filecontents}{gap-approx.csv}
    69.71, 70.83, 81.43, 76.57, 192.95, 77.46, 182.09, 170.29, 223.68, 233.98, 189.69, 253.41, 107.14, 70.45, 21.95, 54.72, 138.10, 95.65, 171.93, 77.11, 81.48, 71.23, 75.14, 172.55, 139.13, 76.25, 198.85, 170.59, 115.52, 97.18, 135.29, 68.57, 136.17, 81.25, 131.91, 126.00, 52.00, 136.67, 62.96, 45.95, 75.00, 85.29, 134.72, 142.25, 182.76, 171.43, 113.73, 229.79, 88.73, 154.17, 113.33, 151.43, 103.57, 133.33, 30.43, 35.00, 104.55, 105.00, 62.14, 58.25, 64.37, 81.42, 105.88, 110.00, 170.18, 127.42, 130.95, 241.67, 72.41, 163.64, 129.03, 86.67, 216.67, 44.44, 5.88, 127.27, 58.33, 69.23, 84.10, 77.36, 80.62, 83.37, 207.17, 177.89, 221.83, 215.51, 335.60, 259.56, 257.89, 272.44, 185.25, 256.80, 137.40, 179.49, 183.53, 130.28, 296.64, 96.34, 73.96, 81.62, 80.15, 72.59, 220.18, 157.14, 214.84, 214.71, 241.74, 179.27, 159.32, 304.42, 297.92, 286.84, 138.57, 225.00, 106.00, 80.77, 238.46, 266.07, 94.29, 83.49, 61.17, 77.29, 134.21, 160.16, 113.59, 219.51, 268.33, 208.33, 276.00, 326.09, 330.23, 281.25, 169.77, 302.33, 134.38, 240.63, 332.61, 220.00, 73.13, 74.24, 84.56, 78.20, 206.45, 191.94, 193.33, 178.13, 284.44, 269.39, 252.00, 222.73, 95.74, 243.90, 54.55, 195.56, 356.67, 178.79, 168.42, 265.00, 58.94, 80.47, 85.83, 69.28, 246.00, 172.58, 187.04, 203.08, 241.03, 215.00, 275.00, 279.49, 362.86, 29.63, 353.57, 365.38, 57.14, 210.00, 363.64, 300.00, 86.63, 88.65, 80.98, 88.36, 219.61, 197.17, 210.69, 230.27, 310.87, 300.00, 299.46, 317.86, 257.31, 302.38, 372.22, 317.39, 458.02, 229.82, 306.59, 366.67, 90.96, 81.06, 85.37, 82.76, 232.39, 228.13, 164.43, 191.26, 317.70, 327.38, 287.37, 267.92, 323.81, 309.68, 445.33, 429.58, 354.00, 396.97, 414.29, 164.41, 25.66, 86.34, 84.18, 87.14, 223.64, 238.39, 214.77, 211.20, 314.52, 376.81, 319.70, 314.08, 372.00, 424.56, 460.98, 300.00, 460.53, 324.39, 447.73, 265.96, 83.65, 86.06, 89.54, 82.19, 195.16, 244.29, 173.24, 213.58, 275.00, 284.62, 361.54, 310.81, 395.83, 293.94, 317.86, 342.86, 355.56, 308.00, 485.29, 303.13, 84.56, 87.72, 80.74, 85.23, 189.74, 167.92, 218.97, 200.00, 370.00, 327.66, 337.21, 288.64, 400.00, 318.18, 380.00, 340.00, 471.43, 468.00, 340.00, 429.63, 89.16, 91.73, 92.56, 85.99, 215.50, 227.57, 247.55, 222.03, 289.68, 364.56, 321.28, 329.73, 394.44, 425.52, 463.38, 418.00, 466.96, 218.00, 545.83, 290.20, 89.79, 87.58, 88.50, 84.98, 196.88, 217.52, 225.00, 238.79, 211.88, 302.63, 334.31, 344.74, 456.96, 467.11, 315.66, 426.58, 506.35, 498.21, 497.92, 365.45, 91.37, 82.86, 84.65, 91.60, 232.54, 250.88, 258.10, 257.25, 367.11, 378.57, 339.08, 328.57, 428.30, 467.74, 412.07, 492.31, 397.96, 475.00, 444.74, 289.19, 91.67, 76.22, 82.93, 89.60, 223.44, 225.93, 219.75, 246.75, 302.08, 337.10, 336.51, 297.92, 275.76, 422.22, 434.69, 402.78, 363.64, 518.18, 463.64, 442.86, 89.29, 89.32, 90.23, 94.12, 233.33, 236.07, 226.09, 214.93, 332.00, 322.73, 376.60, 371.74, 424.32, 488.57, 508.82, 421.88, 452.17, 540.00, 486.96, 419.23
\end{filecontents}

\begin{figure*}[t]
\centering
\begin{subfigure}[t]{0.45\textwidth}
\centering
\begin{tikzpicture}
	\pgfplotstableread[col sep=comma]{gap-bp.csv}\csvdataA
	\pgfplotstabletranspose\datatransposedA{\csvdataA}

	\pgfplotstableread[col sep=comma]{gap-c.csv}\csvdataB
	\pgfplotstabletranspose\datatransposedB{\csvdataB}


    
	\begin{axis}[
        y=0.25cm,
        x=1cm,
        xminorgrids=true,
		boxplot/draw direction = y,
		x axis line style = {opacity=0},
		axis x line* = bottom,
		axis y line* = left,
        enlarge y limits={abs=0.4cm},
		ymajorgrids,
		xtick = {1, 2, 3, 4},
		xticklabel style = {align=center, font=\footnotesize, rotate=0},
		xticklabels = {$\mathcal{P}$, $\mathcal{F}$},
		xtick style = {draw=none}, 
		ylabel = {Gap \%},
        boxplot/whisker range={100},
        label style={font=\small},
        tick label style={font=\small},    
        title style={font=\small},
        legend style={font=\footnotesize},
        x tick label style={font=\small},
        y tick label style={font=\footnotesize}
	]
        \addplot+[boxplot, fill, draw=black] table[y index=1] {\datatransposedA};
		\addplot+[boxplot, fill, draw=black] table[y index=1] {\datatransposedB};
	\end{axis}
\end{tikzpicture}
\caption{Final integrality gaps produced by $\mathcal{F}$ and $\mathcal{P}$.}\label{fig:boxplot}
\end{subfigure}
\begin{subfigure}[t]{0.45\textwidth}
\centering
\begin{tikzpicture}
    \pgfplotstableread[col sep=comma]{gap-heur-tg.csv}\csvdataC
    \pgfplotstabletranspose\datatransposedC{\csvdataC}

    \pgfplotstableread[col sep=comma]{gap-heur-r.csv}\csvdataD
	\pgfplotstabletranspose\datatransposedD{\csvdataD}

    \pgfplotstableread[col sep=comma]{gap-approx.csv}\csvdataE
	\pgfplotstabletranspose\datatransposedE{\csvdataE}
    
	\begin{axis}[
        ymode=log,
        log basis y={2},
        x=1cm,
        xminorgrids=true,
		boxplot/draw direction = y,
		x axis line style = {opacity=0},
		axis x line* = bottom,
		axis y line* = left,
        enlarge y limits={abs=0.4cm},
		ymajorgrids,
		xtick = {1, 2, 3, 4},
		xticklabels = {$\mathcal{H}$, $\mathcal{H}_R$, $\mathcal{A}$},
		xtick style = {draw=none}, 
		ylabel = {Gap \%},
        boxplot/whisker range={100},
        label style={font=\small},
        tick label style={font=\small},    
        title style={font=\small},
        legend style={font=\footnotesize},
        x tick label style={font=\small},
        y tick label style={font=\footnotesize},
        cycle list shift={2}
	]
        \addplot+[boxplot, fill, draw=black] table[y index=1] {\datatransposedC};
        \addplot+[boxplot, fill, draw=black] table[y index=1] {\datatransposedD};
        \addplot+[boxplot, fill, draw=black] table[y index=1] {\datatransposedE};
	\end{axis}
\end{tikzpicture}
\caption{Primal heuristic gaps (logarithmic scale).}\label{fig:boxplot:heur}
\end{subfigure}
\caption{Gaps produced by the proposed solving methods and heuristics.}\label{fig:boxplots}
\end{figure*}

Another distinguishing factor is the node exploration behavior in the proposed branch and price. 
The number of branch-and-bound nodes remains relatively low across the board, even for large \( n \) and high \( k \). This suggests that the strength of the root node relaxation allows $\mathcal{P}$ to make significant progress early in the search, avoiding large enumeration trees.

An additional observation is the presence of the tailing-off effect in the column generation, particularly in large instances. In some configurations, the pricing problem fails to close within the time limit, preventing the computation of a valid lower bound. These cases are explicitly indicated in the table via parentheses, and the reported gap values in those rows are computed only over the subset of instances for which a root node bound was available.

In summary, $\mathcal{F}$ is preferable for small-scale instances or when \( k \) is low, offering faster solutions and simpler implementation. 
However, as the problem complexity grows, $\mathcal{P}$ clearly demonstrates its superiority in both scalability and solution quality. Its consistent ability to solve more instances to optimality, maintain low gaps, and deliver stable computational performance makes it the more robust choice for larger and more difficult scenarios.

Finally, we observe that the primal heuristics based on the 
$k$-approximation algorithm presented in Section~\ref{sec:approximation} 
typically yield solutions with significantly better approximation ratios 
than those established in theory. 
Figure~\ref{fig:boxplot:heur} illustrates the quality of the solutions 
obtained by Algorithm~\ref{alg:heur}, denoted by~$\mathcal{H}$, by 
its variant that explores only the root node in 
Algorithm~\ref{alg:heur}, denoted by~$\mathcal{H}_R$, and by the $k$-approximation algorithm, denoted by~$\mathcal{A}$. 
We emphasize that these heuristics are independent of any specific 
formulation. 
The reported \emph{primal heuristic gaps} are computed with respect to 
the best primal solutions found by methods~$\mathcal{F}$ and~$\mathcal{P}$.

\section{Conclusion}\label{sec:conclusion}

This work presents both theoretical and practical contributions to a relevant class of graph partitioning problem, namely the balanced spanning forest problem.
We first investigated its computational complexity by proving that both min-max and max-min variants are $\NP$-hard even on very restricted instances such as complete graphs with $k=2$, unweighted split graphs, and unweighted bipartite graphs for any $k\geq 2$ fixed.
Moreover, we showed that these problems do not admit subexponential-time algorithms, unless the Exponential-Time Hypothesis fails. 
We also designed  a tight $k$-approximation algorithm for the min-max version, which was later used as the basis of primal heuristics in the computational experiments.

We also proposed exact solution approaches for the \textsc{min-max-bsf} problems based on three mixed-integer linear programming formulations. 
Extensive computational experiments on synthetically generated instances demonstrated the strengths and limitations of each approach. 
Among the most challenging aspects of solving the problem is achieving tight lower bounds within reasonable time limits, particularly for large graphs and intermediate to high values of \(k\). 
The computational results suggested that the method based on the compact flow formulation is preferable for small instances and instances with $k=2$.
As the problem size or complexity increases, the performance of this approach deteriorates significantly. 
In contrast, the branch-and-price algorithm consistently provided stronger lower bounds and solved more challenging instances to optimality, especially when \( k > 2\). 

Despite the branch-and-price's superior scalability and bound quality, its most significant bottleneck is the tailing-off effect in the column generation phase, especially at the root node.
We observed instances where improvements in the dual bound progress very slowly, while the time required to solve the pricing problem increases significantly. This limits efficiency in certain cases and highlights the need for further enhancements.
A natural direction for further research is to improve the performance of the pricing solver, which currently represents a computational bottleneck. 

\section*{Acknowledgments}
	Research of the second and  third authors was supported by Internal Funds KU Leuven (Grant C14/22/026).

\bibliography{biblio}
\bibliographystyle{abbrvnat}
\end{document}